\newcommand{\R}{\mathbb{R}}
\newtheorem{theorem}{Theorem}[section]
\newtheorem{corollary}{Corollary}[section]
\newtheorem{remark}{Remark}[section]
\newcommand{\p}{\partial}
\newcommand{\bb}{\begin{equation}}
\newcommand{\ee}{\end{equation}}
\newcommand{\ba}{\begin{array}}
\newcommand{\ea}{\end{array}}
\newcommand{\f}{\frac}
\newcommand{\ds}{\displaystyle}
\newcommand{\al}{\alpha}
\newcommand{\be}{\beta}
\newcommand{\sign}{\text{sgn}\,}
\numberwithin{equation}{section}
\definecolor{TextColor}{rgb}{0.75, 0.75, 0.75}
\title{An integrable pseudospherical equation with pseudo-peakon solutions}
\author[1]{Priscila Leal da Silva}\author[2] {Igor Leite Freire}\author[3]{Nazime Sales Filho}
\affil[1,2]{Departamento de Matemática, Universidade Federal de São Carlos\\
Rodovia Washington Luís, Km 235, 13565-905\\
São Carlos-SP, Brasil\\
$^1$\texttt{priscilals@ufscar.br} 
  $^2$\texttt{igor.freire@ufscar.br} \\
  $^2$\texttt{igor.leite.freire@gmail.com}}
\affil[3]{Faculdade de Ciência e Tecnologia\\Universidade Federal de Mato Grosso\\Av. Fernando Corrêa da Costa, nº 2367, Bairro Boa Esperança, 78060-900\\ Cuiabá, MT - Brasil\\
\texttt{sales.nazime@gmail.com}}
\begin{document}
\maketitle
\begin{abstract}
We study an integrable equation whose solutions define a triad of one-forms describing a surface with Gaussian curvature -1. We identify a local group of diffeomorphisms that preserve these solutions and establish conserved quantities. From the symmetries, we obtain invariant solutions that provide explicit metrics for the surfaces. These solutions are unbounded and often appear in mirrored pairs. We introduce the ``collage'' method, which uses conserved quantities to remove unbounded parts and smoothly join the solutions, leading to weak solutions consistent with the conserved quantities. As a result we get pseudo-peakons, which are smoother than Camassa-Holm peakons. Additionally, we apply a Miura-type transformation to relate our equation to the Degasperis-Procesi equation, allowing us to recover peakon and shock-peakon solutions for it from the solutions of the other equation.

\end{abstract}

%{\bf MSC classification 2020:} 35Q51, 37K10.

% keywords can be removed
\keywords{Integrable equations \and Equations describing pseudospherical surfaces \and Symmetries \and Conserved quantities \and Blow up of solutions \and Shock-peakons}
\newpage

\section{Introduction}\label{sec1}

In this paper we consider the equation 
\bb\label{1.0.1}
u_t-u_{txx}=16uu_x-8u_xu_{xx}+2u_{xx}^2-4uu_{xxx}+2u_xu_{xxx},
\ee
where $u=u(x,t)$ is a field variable, while $t$ and $x$ can be regarded as temporal and spatial variables, respectively.

This equation was studied a couple of years ago in \cite{tu-dcds,tu-na,tu-mana} from a qualitative perspective. More precisely, these references showed that Cauchy problems involving \eqref{1.0.1} are well-posed in certain Besov spaces \cite[Theorem 3.1]{tu-na} whereas global weak solutions were considered in \cite{tu-mana}. Moreover, it also has solutions blowing up in finite time in the form of wave breaking \cite[Theorem 4.1]{tu-na}, see also \cite[Section 3]{tu-mana}.

Despite these results of more qualitative nature, the equation first appeared in the context of integrable systems, having been discovered by Novikov, see \cite[Equation 16]{nov}. It has an infinite hierarchy of symmetries \cite[Theorem 3]{nov}, and is Lax integrable in the sense that it is the compatibility condition for the system
\bb\label{1.0.2}
\ba{l}
\psi_x-\psi_{xxx}-\lambda(2m-m_x)\psi=0,\\
\\
\ds{\psi_t=\f{2}{\lambda}\psi_{xx}+2(2u-u_x)\psi_x-2\Big(2u_x-u_{xx}+\f{2}{3\lambda}\Big)\psi},
\ea
\ee
where $m=u-u_{xx}$.

System \eqref{1.0.2} implies \eqref{1.0.1} as the resulting compatibility condition of a $\frak{sl}(3,\R)$ valued zero curvature representation (ZCR).

While \eqref{1.0.2} is enough to prove the existence of a Lax pair for the equation, it is also a strong indication of that \eqref{1.0.1} could not describe two-dimensional objects emerging from the compatibility of $\frak{sl}(2,\R)-$valued ZCR, as observed in other integrable models like the famous Korteweg-de Vries or Camassa-Holm equations \cite{sasaki,chern,reyes2002,reyes2011}. Surprisingly, we have discovered that \eqref{1.0.1} can also be obtained from a $2\times 2$ ZCR, which in some sense violates what would be expected from an equation having a third order Lax pair like \eqref{1.0.2}.

In the next section we explore this unexpected geometric nature of the equation, exhibiting a triad of one-forms satisfying the structure equations for a pseudospherical surface, in the sense of the works by Chern and Tenenblat \cite{chern} and Reyes \cite{reyes2002,reyes2011}.

In Section \ref{sec3} we classify the characteristics for conservation laws of the equation up to second order. We were able to find five characteristics for conservation laws, three of them having second order terms. Next, we use them to establish conservation laws and then, conserved quantities.

Lie symmetries are obtained in Section \ref{sec4}. From them we obtain some solutions, that we explore in connection with the pseudospherical structure of the equation to provide explicit metrics for the corresponding surfaces.

We have noticed that some of the invariant solutions could be combined in order to give rise to new solutions compatible with some of the conservation laws. We have named this new procedure as {\it collage process}. In particular, such a process can also derive the famous peakon solutions for the Camassa-Hom and Degasperis-Procesi equation (this is explained in the Discussion). In our particular case, it has as an outcome the emergence of continuously differentiable peakon-shaping  solutions for the equation. 

Although these ``peakons'' are differentiable ($C^1$ regularity), they are not strong solutions for \eqref{1.0.1}, since their second derivatives have jumps. Remarkably, we also obtained a system of ODEs to what we might call ``2 continuously differentiable peakon-like'' solutions. These additional solutions leave invariant some of the conserved quantities we report in section 3. However, we also succeeded in finding non-conservative peakon like solutions, that unlike their conservative siblings, may blow up in finite time.  These results are shown in Section \ref{sec5}.

The solutions obtained in Section \ref{sec5} belong to the Sobolev space $H^{3/2+\epsilon}(\R)$, $\epsilon\in[0,1)$, as long as they exist. This motivates us to look for generalisations of them. In Section \ref{sec6} we report a class of conservative $C^1$ solutions and, more interestingly, we also obtained $C^1$ solutions blowing up in finite time.

In Section \ref{sec7} we look for a sort of multi-peakon analogous for the solutions we found in sections \ref{sec5} and \ref{sec6}. We obtained two types of solutions: those leaving invariant some functional and globally well-defined, and others developing singularities in finite time.

We use a linear operator between Sobolev spaces applying $H^s(\R)$ into $H^{s-1}(\R)$, and show relations between solutions of \eqref{1.0.1} and the famous Degasperis-Procesi (DP) equation
\bb\label{1.0.3}
u_t-u_{txx}+4uu_x=3u_xu_{xx}+uu_{xxx}.
\ee

As a result, from solutions of \eqref{1.0.1} we obtain solutions for the DP equation \eqref{1.0.3} in Section \ref{sec8}.

Our results are discussed in Section \ref{sec9}, while our conclusions are given in Section \ref{sec10}.

\section{Relations with pseudospherical surfaces}\label{sec2}

By a differential function we mean analytic functions depending on $(x,t,u)$ and derivatives of $u$ of finite order, but arbitrary. Here $u$ is assumed to be a function of $(x,t)$. Their collection is denoted by ${\cal A}$. We say that a differential function $f\in{\cal A}$ has order $k$, where $k$ is a non-negative integer, if the highest derivative appearing in $f$ is of order $k$. For the particular case whenever $k=0$, that is, $f=f(x,t,u)$, then $f$ is said to be a $0-$th order differential function. For further details, see \cite[Chapter 5]{olverbook} and \cite{vitolo}.

A differential equation for a function $u=u(x,t)$, ${\cal E}=0$, 
is said to describe a pseudospherical surface (PSS equation for short) if there are one-forms 
\bb\label{2.0.1}
\omega_i=f_{i1}dx+f_{i2}dt,\quad 1\leq i\leq 3,
\ee
where $f_{ij}\in{\cal A}$ and the triple $\{\omega_1,\omega_2,\omega_3\}$ satisfies 
\bb\label{2.0.2}
d\omega_1=\omega_3\wedge\omega_2,\quad d\omega_2=\omega_1\wedge\omega_3,\quad d\omega_3=\omega_1\wedge\omega_2
\ee
modulo ${\cal E}=0$. In consequence, regions (simply connected open sets) of the domain of the solution $u$ for which $(\omega_1\wedge\omega_2)(u(x,t))\neq0$ everywhere enable us to define a Riemannian metric
$$
g=\omega_1^2+\omega_2^2
$$
determining a surface of constant Gaussian curvature ${\cal K}=-1$. Surfaces of constant and negative Gaussian curvature are called {\it pseudospherical surfaces} (PSS). For further details on PSS equations, see \cite{chern,reyes2002,reyes2011,sasaki}.

We want to avoid problems regarding regularity of solutions when treated from a geometric viewpoint (for a better discussion, see \cite{freire-ch}). Therefore, throughout this paper, whenever we consider \eqref{1.0.1} in connection with PSS, we will always assume $C^\infty$ solutions.

Consider the one-forms
\bb\label{2.0.3}
\ba{lcl}
\omega_1&=&-2dx,\\
\\
\omega_2=\omega_3&=&\ds{\Big(1-2u+u_x+2u_{xx}-u_{xxx}\Big)dx}\\
\\
&+&\ds{\Big(16u_x^2-16uu_x+16uu_{xx}-16u_xu_{xx}-4uu_{xxx}+2u_{xx}^2+2u_xu_{xxx}\Big)dt}.
\ea
\ee

A lengthy, but simple, calculation shows that
\begin{align}\label{2.0.4}
    \begin{aligned}
        d\omega_1&-\omega_3\wedge\omega_2\equiv0,\\
    d\omega_2&-\omega_1\wedge\omega_3=d\omega_3-\omega_1\wedge\omega_2\\
    &=(2-\partial_x)\Big(u_t-u_{txx}-\big(16uu_x-8u_xu_{xx}+2u_{xx}^2-4uu_{xxx}+2u_xu_{xxx}\big)\Big)dx\wedge dt.
    \end{aligned}
\end{align}

Therefore, on the solutions of \eqref{1.0.1} the set of equations \eqref{2.0.4} is equivalent to \eqref{2.0.2}. This proves the following result.

\begin{theorem}\label{teo2.1}
    Equation \eqref{1.0.1} is the compatibility condition for the one-forms \eqref{2.0.3} to describe a PSS. 
\end{theorem}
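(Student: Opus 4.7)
The plan is to verify the three differential identities gathered in (2.0.4), from which the theorem is immediate. The first one, $d\omega_1 = \omega_3\wedge\omega_2$, is trivial: since $\omega_1 = -2\,dx$ is closed, $d\omega_1 = 0$, and since the data enforce $\omega_3 = \omega_2$, we also have $\omega_3\wedge\omega_2 = \omega_2\wedge\omega_2 = 0$. Hence this identity holds on the full jet space, with no condition on $u$.

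The remaining two identities collapse to the single equality $d\omega_2 = \omega_1\wedge\omega_2$, again because $\omega_2 = \omega_3$. Writing $\omega_2 = P\,dx + Q\,dt$ with
\[
P = 1 - 2u + u_x + 2u_{xx} - u_{xxx},
\]
and $Q$ the $dt$-coefficient read off from (2.0.3), direct computation gives $d\omega_2 - \omega_1\wedge\omega_2 = (Q_x - P_t + 2Q)\,dx\wedge dt$. So the whole theorem reduces to verifying the differential identity $Q_x - P_t + 2Q = (2-\partial_x)\mathcal{E}$, where $\mathcal{E}$ denotes the left-hand side minus the right-hand side of (1.0.1).

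I would check this identity by splitting it into mixed-derivative and purely spatial parts. The $t$-derivative contributions come only from $-P_t$, yielding $2u_t - u_{tx} - 2u_{txx} + u_{txxx}$; after invoking equality of mixed partials (legitimate under the $C^\infty$ assumption adopted in Section \ref{sec2}), this reproduces the mixed-derivative portion of $(2-\partial_x)(u_t - u_{txx})$. The purely spatial contribution $Q_x + 2Q$ must then match $-(2-\partial_x)$ applied to the nonlinear right-hand side of (1.0.1).

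The main obstacle is purely bookkeeping: $Q$ is a sum of seven quadratic monomials in $u, u_x, u_{xx}, u_{xxx}$, so $Q_x$ produces about a dozen terms by Leibniz that must collect into the prescribed combination, with several partial cancellations against $2Q$. There is no conceptual subtlety, only the need to track many similar-looking terms carefully. Once the identity above is verified, the conclusion is immediate: for any $C^\infty$ solution of (1.0.1) one has $\mathcal{E}\equiv 0$, so all three structure equations in (2.0.2) hold, and the triad $\{\omega_1,\omega_2,\omega_3\}$ defines, wherever $\omega_1\wedge\omega_2\neq 0$, a surface of constant Gaussian curvature $-1$.
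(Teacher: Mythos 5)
Your proposal is correct and follows essentially the same route as the paper: a direct verification that $d\omega_1-\omega_3\wedge\omega_2\equiv 0$ (trivially, since $\omega_1$ is closed and $\omega_2=\omega_3$) and that $d\omega_2-\omega_1\wedge\omega_2=(Q_x-P_t+2Q)\,dx\wedge dt=(2-\partial_x)\mathcal{E}\,dx\wedge dt$, which is exactly the content of \eqref{2.0.4}. The bookkeeping identity $Q_x+2Q=-(2-\partial_x)N$ for the nonlinear part does check out, so the argument is complete.
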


Not all solutions of an equation lead to a PSS structure. Actually, in order for a solution to define a PSS structure, it has to satisfy the condition $(\omega_1\wedge\omega_2)(u(x,t))\neq0$ on the region $u$ is evaluated. Such solutions are called ${\it generic}$, whereas those for which $(\omega_1\wedge\omega_2)(u(x,t))=0$ are called non-generic solutions.

Let us characterise the non-generic solutions of \eqref{1.0.1}.

\begin{theorem}\label{teo2.2}
    Let $\Omega\subseteq\R^2$ be an open and simply connected set, and $u:\Omega\rightarrow\R$ be a solution of \eqref{1.0.1}. Then $u$ is a non-generic solution of \eqref{1.0.1} if and only if $u$ satisfies \eqref{1.0.1} and the ODE $\p_x(2-\p_x)(u_x-2u)^2=0$ as well.
\end{theorem}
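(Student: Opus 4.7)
The plan is to reduce the condition $(\omega_1\wedge\omega_2)(u(x,t))\equiv 0$ on $\Omega$ (i.e.\ non-genericity) to the stated ODE by direct algebraic manipulation of the one-forms in \eqref{2.0.3}. Since $\omega_1=-2\,dx$ has no $dt$-component, only the $dt$-coefficient of $\omega_2$ survives the wedge, so
\[
\omega_1\wedge\omega_2 \;=\; -2\,F(u,u_x,u_{xx},u_{xxx})\, dx\wedge dt,
\]
where $F=16u_x^2-16uu_x+16uu_{xx}-16u_xu_{xx}-4uu_{xxx}+2u_{xx}^2+2u_xu_{xxx}$. Thus the non-generic condition is exactly $F\equiv 0$ on $\Omega$, and the whole proof reduces to showing that $F$ equals a specific total-derivative expression in $(u_x-2u)^2$.

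The key observation is that $F$ should be recognizable, up to sign, as $\partial_x(2-\partial_x)(u_x-2u)^2$. To verify this, I introduce $v:=u_x-2u$, so that $v_x=u_{xx}-2u_x$ and $v_{xx}=u_{xxx}-2u_{xx}$, and compute
\[
\partial_x(2-\partial_x)v^2 \;=\; 2\,\partial_x v^2-\partial_x^2 v^2 \;=\; 4 v v_x - 2 v_x^{\,2} - 2 v v_{xx}.
\]
Expanding $vv_x$, $v_x^2$, and $vv_{xx}$ in terms of $u$ and its derivatives and collecting gives
\[
\partial_x(2-\partial_x)(u_x-2u)^2 \;=\; 16u_xu_{xx}-16u_x^2-16uu_{xx}+16uu_x-2u_{xx}^2-2u_xu_{xxx}+4uu_{xxx},
\]
which is precisely $-F$. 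Hence
\[
\omega_1\wedge\omega_2 \;=\; 2\,\partial_x(2-\partial_x)(u_x-2u)^2\, dx\wedge dt.
\]

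From this identity, the non-genericity condition $(\omega_1\wedge\omega_2)(u(x,t))=0$ on $\Omega$ is equivalent to the vanishing of $\partial_x(2-\partial_x)(u_x-2u)^2$ on $\Omega$. Both implications of the ``if and only if'' follow at once, provided $u$ is a solution of \eqref{1.0.1} (which is the standing assumption of the statement). The only real obstacle is the algebraic step matching $F$ with $-\partial_x(2-\partial_x)(u_x-2u)^2$; this is routine once $v=u_x-2u$ is chosen as the natural building block, motivated by the observation that the highest-order terms of $F$, namely $-4uu_{xxx}+2u_xu_{xxx}=2u_{xxx}(u_x-2u)$, already factor through $v$.
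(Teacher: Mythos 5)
Your proposal is correct and follows essentially the same route as the paper: both reduce non-genericity to the vanishing of the $dt$-coefficient of $\omega_2$ (since $\omega_1=-2\,dx$) and then verify the algebraic identity $\omega_1\wedge\omega_2=2\,\partial_x(2-\partial_x)(u_x-2u)^2\,dx\wedge dt$, the paper via an intermediate factored form in $v=u_x-2u$ and you by direct expansion. Your computed expansion matches $-F$ exactly, so the argument is complete.
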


\begin{proof}
A solution $u$ of \eqref{1.0.1} is a non-generic solution for \eqref{1.0.1} if and only if $\omega_1\wedge\omega_2\big|_\Omega\equiv0$. From \eqref{2.0.3} we conclude that
$$
\omega_1\wedge\omega_2=8(u_x-2u)(u_{xx}-2u_x)-4(u_x-2u)(u_{xxx}-2u_{xx})-(2u_{xx}-4u_x)^2,
$$
that, after rearranging, is equivalent to
\bb\label{2.0.5}
\omega_1\wedge\omega_2=2\p_x(2-\p_x)(u_x-2u)^2.
\ee

As a result, from \eqref{2.0.5} and \eqref{1.0.1} we conclude that $u$ is non-generic if and only if $u$ solves \eqref{1.0.1} and also
$\p_x(2-\p_x)(u_x-2u)^2=0$.
\end{proof}

In consequence of the above results, if $u$ is not a solution as described in Theorem \ref{teo2.2}, then
\bb\label{2.0.7}
\ba{lcl}
g&=&\Big(4+\Big(1-2u+u_x+2u_{xx}-u_{xxx}\Big)^2\Big)dx^2+2\ds{\Big(1-2u+u_x+2u_{xx}-u_{xxx}\Big)\times}\\
\\
&&\ds{\times\Big(16u_x^2-16uu_x+16uu_{xx}-16u_xu_{xx}-4uu_{xxx}+2u_{xx}^2+2u_xu_{xxx}\Big)dxdt}\\
\\
&+&\ds{\Big(16u_x^2-16uu_x+16uu_{xx}-16u_xu_{xx}-4uu_{xxx}+2u_{xx}^2+2u_xu_{xxx}\Big)^2dt^2}
\ea
\ee
defines a metric on any open and simply connected subset of the domain of $u$ in which it does not agree with the solution characterised in Theorem \ref{teo2.2} at any point.

\section{Conserved quantities}\label{sec3}

Let ${\cal E}={\cal E}(t,x,u,\cdots)$, $C^0,C^1$ and $\phi$ be differential functions satisfying the relation
\bb\label{3.0.1}
D_tC^0+D_x C^1=\phi {\cal E}.
\ee

On the solutions of the differential equation ${\cal E}=0$ we have $D_t C^0+D_x C^1\equiv0$. We then say that the pair $(C^0,C^1)$ is a conserved current for the PDE ${\cal E}=0$, whereas the function $\phi$ in \eqref{3.0.1} is called {\it characteristic} of the conservation law (characteristic, for short). In particular, \eqref{3.0.1} is called {\it characteristic form of the conservation laws}. For further details, see \cite[page 266]{olverbook}.

Very often, we know the equation, but not necessarily its conserved currents nor the corresponding characteristics. Usually we can determine the characteristics of the conservation law using the fact that total derivatives belong to the kernel of the Euler-Lagrange operator \cite[Theorem 4.7, page 248]{olverbook}. Finding these differential functions is, definitively, not an easy task but, fortunately, there are some symbolic packages available at our disposal enabling us to obtain characteristics for an equation algorithmically once the order of the characteristic is fixed, see \cite{chev-2007,chev-2010-1,chev-2010-2, chev-2014, chev-2017}. With their help, we have the following result.

\begin{theorem}\label{teo3.1}
    Up to second order, the characteristics of any conservation law for \eqref{1.0.1} are
    \bb\label{3.0.2}
    \ba{lcl}
	\phi_1&=&1,\quad \phi_2=e^{2x},\quad\phi_3=u,\\
 \\
	\phi_4&=&-4u^{2}+u_{x}^{2}+2uu_{xx}-u_{x}u_{xx}-\dfrac{1}{2}u_{tx},\\
 \\
	\phi_5&=&e^{-2x}\left(3u^{2}-4uu_{x}-u_{x}^{2}+u_{t}-4uu_{xx}+2u_{x}u_{xx}+u_{tx} \right).
	\ea
	  \ee
\end{theorem}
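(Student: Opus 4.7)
The plan is to use the standard characterization recalled in the preamble to the theorem: a differential function $\phi$ is a characteristic of a conservation law for $\mathcal{E}=0$ if and only if $\mathsf{E}_u(\phi\mathcal{E})=0$ identically on the free jet space, where $\mathsf{E}_u$ is the Euler--Lagrange operator in $u$ and $\mathcal{E}$ denotes the left-hand side of \eqref{1.0.1} minus its right-hand side. This equivalence is precisely the content of \cite[Theorem 4.7]{olverbook}, which says that total divergences form the kernel of $\mathsf{E}_u$. Adopting the general second-order ansatz
$$
\phi=\phi(t,x,u,u_t,u_x,u_{tt},u_{tx},u_{xx}),
$$
the problem reduces to solving the single differential identity $\mathsf{E}_u(\phi\mathcal{E})=0$.

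I would first expand $\mathsf{E}_u(\phi\mathcal{E})$ explicitly. Since $\mathcal{E}$ has order three and $\phi$ has order two, the product $\phi\mathcal{E}$ has order three and $\mathsf{E}_u(\phi\mathcal{E})$ involves derivatives of $u$ up to order five. Regarded as a polynomial in the independent jet coordinates, the identity $\mathsf{E}_u(\phi\mathcal{E})=0$ splits into a large overdetermined linear system of PDEs for $\phi$, obtained by collecting coefficients of each distinct monomial in the highest-order jet variables.

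I would then solve this system in stages. The highest-order coefficients force $\phi$ to be affine in $u_{tt},u_{tx},u_{xx}$ with coefficients depending only on $(t,x,u,u_t,u_x)$; further splits in $u_t$ and $u_x$ reduce $\phi$ to a polynomial in the second-order derivatives with coefficients in $(t,x,u)$; iterating, one obtains simple linear PDEs in $(t,x,u)$ whose general solution spans a five-dimensional space. The $e^{\pm 2x}$ factors in $\phi_2$ and $\phi_5$ should emerge from an ODE of the form $f''-2f'=0$ imposed by the structure of the linear-in-$u$ terms $16uu_x-4uu_{xxx}+\cdots$ of \eqref{1.0.1}. Once five linearly independent solutions are identified, the argument closes with a direct verification that $\mathsf{E}_u(\phi_i\mathcal{E})=0$ for $i=1,\ldots,5$, which simultaneously exhibits each $\phi_i$ as a characteristic and confirms that the solution space has been exhausted.

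The main obstacle is the sheer bulk of the computation: expanded by hand, $\mathsf{E}_u(\phi\mathcal{E})$ contains hundreds of monomials in the jet variables, and the resulting overdetermined system is impractical to process manually. For this reason I would implement the splitting and linear solution stages using a symbolic package such as Cheviakov's \texttt{GeM}, i.e.\ exactly the class of tools cited by the authors just before the theorem statement. No conceptual difficulty arises beyond organising this computation and confirming that no further independent characteristic exists at second order.
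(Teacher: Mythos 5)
Your proposal coincides with the paper's own justification: the authors state that the characteristics were first found with the symbolic packages \cite{chev-2007,chev-2010-1,chev-2010-2,chev-2014,chev-2017} by solving the determining system $\mathsf{E}_u(\phi\,\mathcal{E})=0$ for a second-order ansatz, and then verified directly by applying the Euler--Lagrange operator. Your additional remark that completeness requires actually carrying the splitting of the overdetermined system to its end (rather than merely checking the five candidates) is correct and, if anything, slightly more careful than the paper's one-line argument.
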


In consequence, we have the following conservation laws for \eqref{1.0.1}:

\begin{theorem}\label{teo3.2}
    The corresponding conservation laws for \eqref{1.0.1}, determined by the characteristics given by Theorem \ref{teo3.1}, are $D_tC^0+D_x C^1=0$, where the components $C^0$ and $C^1$ of the conserved current $(C^0,C^1)$ are, given by

    \begin{enumerate}
		\item For $\phi_1=1$, we have
		\begin{equation*}
		\left.\begin{aligned}
        C_1^0&=u,\\
		\\
		C_1^1&=-8u^{2}+2u_{x}^{2}+4uu_{xx}-2u_{x}u_{xx}-u_{tx}.
		\end{aligned}\right.
		\end{equation*}

		\item For $\phi_2=e^{2x}$, we have 
		\begin{equation*}
		\left.\begin{aligned}
		C_2^0&=e^{2x}u,\\
		\\
		C_2^1&=\f{2}{3}e^{2x}\Big(4uu_{x}-2uu_{xx}-2u_{x}^2+u_{x}u_{xx}+\dfrac{1}{2}u_{tx}-u_{t}\Big).
		\end{aligned}\right.
		\end{equation*}

  \item For $\phi_3=u$, we have
		\begin{equation*}
		\left.\begin{aligned}
		C_3^0&=\dfrac{1}{2}u^2+\dfrac{1}{2}u_{x}^2,\\
		\\
		C_3^1&=-\dfrac{16}{3}u^{3}-2uu_{x}u_{xx}+\dfrac{2}{3}u_{x}^{3}+4u^2u_{xx}-uu_{tx}.
		\end{aligned}\right.
		\end{equation*}
		
		\item For $\phi_4=-4u^{2}+u_{x}^{2}+2uu_{xx}-u_{x}u_{xx}-\dfrac{1}{2}u_{tx}$, we have 
		\begin{equation*}
		\left.\begin{aligned}
		C_4^0&=-\dfrac{4}{3}u^3+\dfrac{1}{6}u_{x}^3-uu_{x}^2,\\
		\\
		C_4^1&=16u^4-8u^2u_{x}^2-2uu_{tx}u_{xx}-\dfrac{1}{4}u_{t}^2-16u^3u_{xx}-2u_{x}^3u_{xx}-\dfrac{1}{2}u_{x}^2u_{t}+4u^2u_{xx}^2\\
		&+u_{x}^2u_{xx}^2+\dfrac{1}{4}u_{tx}^2+8u^2u_{x}u_{xx}-4uu_{x}u_{xx}^2+4uu_{x}^2u_{xx}+2uu_{x}u_{t}+u_{x}u_{xx}u_{tx}+u_{x}^4\\
        &+4u^2u_{tx}-u^2_{x}u_{tx}.
		\end{aligned}\right.
		\end{equation*}
		
		\item For $\phi_5=e^{-2x}\left(3u^{2}-4uu_{x}-u_{x}^{2}+u_{t}-4uu_{xx}+2u_{x}u_{xx}+u_{tx} \right)$, we have 
		\begin{equation*}
		\left.\begin{aligned}
		C_5^0&=-3e^{-2x}\Big(u^3+uu_{x}^2+\dfrac{1}{3}u_{x}^3\Big),\\
		\\
		C_5^1&=\dfrac{1}{2}e^{-2x}(48u^3u_{x}+24u^3u_{xx}+8u^2u_{x}^2-44u^2u_{x}u_{xx}-16u^2u_{xx}^2-8uu_{x}^3+8uu_{x}^2u_{xx}\\
		&+16uu_{x}u_{xx}^2-4u_{x}^4+4u_{x}^3u_{xx}-4u_{x}^2u_{xx}^2+8uu_{x}u_{t}+8uu_{t}u_{xx}+8uu_{xx}u_{tx}+2u_{x}^2u_{t}\\
		&-4u_{x}u_{t}u_{xx}-4u_{x}u_{xx}u_{tx}-u_{t}^2-2u_{t}u_{tx}-u_{tx}^2+8uu_{x}u_{tx}-6u^2u_{tx}-12u^2u_{t}+2u^2_{x}u_{tx}).
		\end{aligned}\right.
		\end{equation*}
	\end{enumerate}
\end{theorem}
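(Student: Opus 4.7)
Let ${\cal E} := u_t - u_{txx} - 16uu_x + 8u_xu_{xx} - 2u_{xx}^2 + 4uu_{xxx} - 2u_xu_{xxx}$, so that \eqref{1.0.1} reads ${\cal E}=0$. What has to be proved is that, for each $i=1,\ldots,5$, the displayed pair $(C_i^0, C_i^1)$ satisfies the characteristic form \eqref{3.0.1}, i.e.
\[
D_t C_i^0 + D_x C_i^1 \;=\; \phi_i\, {\cal E},
\]
as a polynomial identity in the jet variables $(x,t,u,u_x,u_t,u_{xx},u_{tx},u_{xxx},\ldots)$, not merely modulo ${\cal E}=0$. Since the characteristics $\phi_i$ are given by Theorem~\ref{teo3.1} and both $C_i^0$ and $C_i^1$ are listed explicitly, the proof reduces to a direct case-by-case verification by computing the total derivatives and collecting like monomials.

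\textbf{Execution.} For $\phi_1=1$, for instance, $D_t C_1^0 = u_t$ and a direct differentiation of $C_1^1 = -8u^2 + 2u_x^2 + 4uu_{xx} - 2u_xu_{xx} - u_{tx}$ produces $-16uu_x + 8u_xu_{xx} + 4uu_{xxx} - 2u_{xx}^2 - 2u_xu_{xxx} - u_{txx}$, whose sum with $u_t$ reproduces ${\cal E}$ term by term. The same mechanism handles $\phi_2=e^{2x}$ (where the exponential prefactor contributes the expected $2 e^{2x}$ term upon applying $D_x$) and $\phi_3=u$. For the record, the representatives $(C_i^0, C_i^1)$ themselves can be recovered from $\phi_i\,{\cal E}$ through the standard homotopy operator on the characteristic form of conservation laws \cite[Ch.~5]{olverbook}; they are determined only up to addition of a trivial conserved current $(D_x \eta, -D_t \eta)$, and the listed choices exploit this freedom to keep the densities $C_i^0$ free of explicit $t$-derivatives whenever possible.

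\textbf{Main obstacle.} The substantive computational effort concentrates on the cases $i=4,5$, whose characteristics already contain $u_{tx}$ together with quadratic combinations of second-order derivatives. Consequently $C_4^1$ and $C_5^1$ are polynomials of degree four in the jet variables (times $e^{-2x}$ for $i=5$), with numerous sign-sensitive cross terms such as $u_{x}u_{xx}u_{tx}$, $u_{x}^3 u_{xx}$ and $u u_{xx} u_{tx}$. The delicate point is purely bookkeeping: one must track carefully the contributions coming from $D_t$ acting on monomials of the form $u_x^k$ and $u u_x^k$ (which introduce additional $u_{tx}$-terms) and from $D_x$ acting on the mixed $u_{tx}$-monomials in the flux, then verify that after collecting all terms the expression collapses exactly to $\phi_i {\cal E}$. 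This long but routine verification is precisely what is automated by the symbolic conservation-law packages \cite{chev-2007,chev-2010-1,chev-2010-2,chev-2014,chev-2017} already used to produce Theorem~\ref{teo3.1}, and no further analytic input is needed.
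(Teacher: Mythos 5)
Your proposal is correct and follows essentially the same route as the paper: the authors likewise prove Theorem \ref{teo3.2} by directly checking that each pair $(C_i^0,C_i^1)$ together with $\phi_i$ and \eqref{1.0.1} satisfies the characteristic identity \eqref{3.0.1}, deferring the heavier cases to the same symbolic packages. Your explicit verification for $\phi_1=1$ is accurate, and your remarks on the off-shell nature of the identity and the freedom to add trivial currents are consistent with the paper's (very terse) argument.
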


Equation \eqref{1.0.1} can be seen as a non-local evolution equation (when its solutions are restricted to certain function spaces) and, as we have already mentioned, its variable $t$ can be interpreted as time. The conservation laws are then particularly useful for the construction of quantities that are conserved on the solutions of the equation as long as they exist. Its worth pointing out that, differently from a conservation law, that is an intrinsic property of an equation, a conserved quantity is a property of the solution.

Integrating \eqref{3.0.1} with respect to $x$, and commuting time derivatives with spatial integration, we get
$$
\f{d}{dt}\int_\R C^0dx+C^1\big|_{-\infty}^{+\infty}=\int_\R\phi {\cal E}dx.
$$

If $u$ is a solution of ${\cal E}=0$ for which $C^{1}$ vanishes at infinity, we conclude that
$$
\f{d}{dt}\int_\R C^0dx=0,
$$
meaning that the quantity
$$t\mapsto\int_\R C^0dx$$
is time invariant, or in other words, it is a {\it conserved quantity}.

\begin{theorem}\label{teo3.3}
    Suppose that $u$ is a solution of \eqref{1.0.1} such that $u,u_x\rightarrow0$ as $|x|\rightarrow\infty$ and its second order derivatives are bounded. Then
    \bb\label{3.0.3}
    {\cal H}_0(t)=\int_{\mathbb{R}}u(x,t)\,dx,
    \ee
    \bb\label{3.0.4}
    \mathcal{H}_1(t)=\int_{\mathbb{R}}e^{2x}u(x,t)\,dx,
    \ee
    \bb\label{3.0.5}
 \mathcal{H}_2(t)=\dfrac{1}{2}\int_{\mathbb{R}}\Big(u^2+u_{x}^2\Big)(x,t)\,dx,
    \ee
    \bb\label{3.0.6}
    \mathcal{H}_3(t)=\int_{\mathbb{R}}\Big(-\dfrac{4}{3}u^3+\dfrac{1}{6}u_{x}^3-uu_{x}^2\Big)(x,t)\,dx
\ee
and
\bb\label{3.0.7}
    \mathcal{H}_4(t)=\int_{\mathbb{R}}e^{-2x}\Big(u^3+uu_{x}^2+\dfrac{1}{3}u_{x}^3\Big)(x,t)\,dx,
\ee 
are conserved quantities for the equation obtained from the conservation laws with characteristics $\phi_i$, $1\leq i\leq 5$, respectively. 

For the quantities \eqref{3.0.4} and \eqref{3.0.7} it is additionally assumed that $e^{|x|}u$, $e^{|x|}u_x$, $e^{2|x|}u_{t}$, $e^{2|x|}u_{tx}{\rightarrow 0}$ as $|x|\rightarrow\infty$.
\end{theorem}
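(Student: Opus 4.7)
The strategy is the one sketched immediately before the statement. For each characteristic $\phi_i$ of Theorem \ref{teo3.1}, Theorem \ref{teo3.2} provides a conserved current $(C_i^0, C_i^1)$ satisfying $D_tC_i^0 + D_xC_i^1 = 0$ on every solution of \eqref{1.0.1}. Integrating this identity in $x$ over $\mathbb{R}$ and commuting $\partial_t$ with the spatial integral (justified by the decay hypotheses together with local boundedness in time) yields
\begin{equation*}
\frac{d}{dt}\int_\mathbb{R} C_i^0(x,t)\,dx \;=\; -\bigl[\,C_i^1(x,t)\,\bigr]_{x=-\infty}^{x=+\infty}.
\end{equation*}
Hence conservation of $\mathcal{H}_{i-1}$ reduces, case by case, to two facts: the density $C_i^0$ is integrable in $x$, and the flux $C_i^1$ vanishes at $x \to \pm\infty$.

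Integrability of the unweighted densities $C_1^0 = u$, $C_3^0 = \tfrac{1}{2}(u^2+u_x^2)$ and $C_4^0 = -\tfrac{4}{3}u^3+\tfrac{1}{6}u_x^3-uu_x^2$ follows from $u, u_x \to 0$; the weighted densities $C_2^0 = e^{2x}u$ and $C_5^0 = -3e^{-2x}(u^3+uu_x^2+\tfrac{1}{3}u_x^3)$ demand the sharper hypothesis $e^{|x|}u, e^{|x|}u_x \to 0$, which is tailored to defeat the exponential factor at the unfavourable end of the line. For the boundary term of each flux, every polynomial monomial in $u$, $u_x$, $u_{xx}$ that appears in the formulas of Theorem \ref{teo3.2} tends to zero at $\pm\infty$, since $u, u_x \to 0$ and $u_{xx}$ is bounded.

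The delicate part is the disposal of the terms containing $u_t$ and $u_{tx}$. To control them I would invert the Helmholtz operator in \eqref{1.0.1}, rewritten as $(1-\partial_x^2)u_t = 16uu_x - 8u_xu_{xx} + 2u_{xx}^2 - 4uu_{xxx} + 2u_xu_{xxx}$, against its Green's function $\tfrac{1}{2}e^{-|x-y|}$, and then integrate by parts in $y$ to eliminate $u_{xxx}$ and to recast the problematic $u_{xx}^2$-term as a convolution of quantities with an explicit factor $u$ or $u_x$. The resulting convolution representations of $u_t$ and $u_{tx}$ inherit the decay of $u, u_x$ by dominated convergence and therefore vanish at $\pm\infty$; this settles the fluxes relevant to $\mathcal{H}_0, \mathcal{H}_2, \mathcal{H}_3$. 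For $\mathcal{H}_1$ and $\mathcal{H}_4$, the same analysis, amplified by the hypotheses $e^{2|x|}u_t, e^{2|x|}u_{tx} \to 0$, neutralises the exponential weight $e^{\pm 2x}$ appearing in $C_2^1$ and $C_5^1$. The main obstacle is precisely the decay control on $u_t$ and $u_{tx}$; once it is established, the theorem follows by direct substitution of the formulas of Theorem \ref{teo3.2}.
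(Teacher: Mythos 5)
Your proposal is correct and takes essentially the same route as the paper: the paper's own proof is the one-line observation that Theorem \ref{teo3.3} follows immediately from Theorem \ref{teo3.2} via the scheme $\frac{d}{dt}\int_\R C^0\,dx = -C^1\big|_{-\infty}^{+\infty}$, with the decay hypotheses killing the boundary flux. Your additional step --- representing $u_t$ and $u_{tx}$ through the Green's function $\tfrac12 e^{-|x|}$ of $1-\partial_x^2$ and integrating by parts to show they decay --- is not in the paper but is a genuine improvement, since the stated hypotheses only impose decay on $u_t,u_{tx}$ for the weighted quantities \eqref{3.0.4} and \eqref{3.0.7}, while the fluxes $C_1^1$, $C_3^1$, $C_4^1$ also contain $u_t$ and $u_{tx}$; your convolution argument supplies the missing decay for $\mathcal{H}_0$, $\mathcal{H}_2$, $\mathcal{H}_3$ from the equation itself.
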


The characteristics presented in Theorem \ref{teo3.1} were first obtained using the packages \cite{chev-2007,chev-2010-1,chev-2010-2, chev-2014, chev-2017}. They can be easily, but tediously, proved applying the Euler-Lagrange operator to \eqref{3.0.2} and taking \eqref{1.0.1} into account. It is equally straightforward to check that the conserved currents, their corresponding characteristics and \eqref{1.0.1} satisfy \eqref{3.0.1}, what concludes the proof of Theorem \ref{teo3.2}. Theorem \ref{teo3.3} is an immediate consequence of Theorem \ref{teo3.2}.

\begin{remark}\label{rem3.5}
    Let $-\infty\leq a<b\leq\infty$. The Sobolev space $H^1(a,b)$ is the set of distributions $f$ such that
    $$
    \int_a^b (f(x)^2+f'(x)^2)dx<\infty,
    $$
    the derivative above being considered in the distributional sense. Whenever $a=-\infty$ and $b=\infty$ we simply write $H^1(\R)$, that is endowed with the norm
    $$\|f\|_{H^1(a,b)}=\sqrt{\int_a^b (f(x)^2+f'(x)^2)dx}.$$
    From Theorem \ref{teo3.3}, if $u$ is a solution of \eqref{1.0.1} such that \eqref{3.0.5} is conserved, then $u(\cdot,t)\in H^1(\R)$.
\end{remark}

\section{Lie symmetries, reductions and explicit metrics}\label{sec4}

In this section we find Lie symmetries of \eqref{1.0.1}. Similarly to the characteristics, finding symmetries involves time consuming calculations (for further details, see \cite{olverbook}). Fortunately, likewise the characteristics, we have at our hands some packages \cite{stelios1,stelios2} that make easier the calculations for the generators. For this reason, below we present them without further discussion.

Using \cite{stelios1,stelios2} for finding the local group of diffeomorphism leaving invariant solutions of the equation, we can find the following generators
	\begin{equation}\label{4.0.1}
	X_1=\dfrac{\partial}{\partial x},\; X_2=\dfrac{\partial}{\partial t},\; X_3=t\dfrac{\partial}{\partial t}-u\dfrac{\partial}{\partial u},\; 
	X_4=e^{2x}\dfrac{\partial}{\partial u}.
	\end{equation}
 
The Lie algebra determined by \eqref{4.0.1} is summarised in Table \ref{tab1}.
 
 \begin{table}[h]
		\caption{Commutator table of the generators \eqref{4.0.1}}\label{tab1}
		\begin{center}
		%\rowcolors{1}{}{lightgray}
			\begin{tabular}{c|cccc}
				\toprule
				$[X_i,X_j]$	& $X_1$ & $X_2$ & $X_3$ & $X_4$ \\
				\midrule
				$X_1$ & $0$ &  $0$ &  $0$ &  $2X_4$ \\
				$X_2$ & $0$  & $0$  & $X_2$ &  $0$ \\
				$X_3$ & $0$ & -$X_2$ & $0$ &  $X_4$\\
				$X_4$ & $-2X_4$ & $0$ & $-X_4$ & $0$  \\
				\bottomrule
			\end{tabular}
		\end{center}
	\end{table}

Below we summarise the corresponding local group of diffeomorphisms determined by the generators \eqref{4.0.1} and their corresponding solutions obtained from a known solution $u(x,t)=f(x,t)$.

 \begin{table}[h]
	\caption{Summary of local transformations and corresponding new solutions. Below $\varepsilon\in\R$ is an arbitrary parameter and $u=f(x,t)$ is a known solution.}\label{tab2}
	\begin{center}
		\begin{tabular}{c|c|c}
			\toprule
			Generator	& Transformation & New solution \\
			\midrule
		%\rowcolor[rgb]{0.8,0.8,0.8}
  $X_1$ & $(x+\varepsilon,t,u)$ &  ${\bar u}(x,t)=f(x-\varepsilon,t)$  \\
		$X_2$ & $(x,t+\varepsilon,u)$ &  ${\bar u}(x,t)=f(x,t-\varepsilon)$ \\
		%\rowcolor[rgb]{0.8,0.8,0.8}
  $X_3$ & $(x,e^{\varepsilon} t,e^{-\varepsilon} u)$ &  ${\bar u}(x,t)=e^{\varepsilon} f(x,e^{\varepsilon} t)$\\
		$X_4$ & $(x,t,u+\varepsilon e^{2x})$ &  ${\bar u}(x,t)=f(x,t)-\varepsilon e^{2x}$   \\
			\bottomrule
		\end{tabular}
	\end{center}
\end{table}

\subsection{Reductions}

Table \ref{tab2} shows how we can obtain a new solution from a known one through the fluxes determined by the generators \eqref{4.0.1}. However, we can also find solutions from the generators using the following scheme: if $X$ denotes a linear combination of \eqref{4.0.1}, we can then obtain a function $u=\theta(x,t)$ invariant under $X$, in the sense that $X(u-\theta(x,t))=0$ whenever $u=\theta(x,t)$. We then require this function to be a solution for \eqref{1.0.1}. The main problem of such a process is the fact that we have an infinite number of possible linear combinations of the generators \eqref{4.0.1}.

Applying the procedure described in \cite[section 3.3]{olverbook}, among the arbitrary number of possibilities of linear combination of the generators \eqref{4.0.1} to seek invariant solutions, we can restrict ourselves to those given by 
\begin{equation}\label{4.1.1}
  \al X_1+\be X_3+X_4,\, \al X_1+X_3,\,\al X_1+X_2\,\, \mbox{e}\,\, X_1,
\end{equation}
where $\al$ and $\be$ are arbitrary parameters.

 \begin{table}[h]
	\caption{Adjoint representation. Below $\varepsilon$ is a parameter.}\label{tab3}
	\begin{center}
		\begin{tabular}{c|cccc}
			\toprule
			$Ad(e^{\varepsilon X_i})X_j$	& $X_1$ & $X_2$ & $X_3$ & $X_4$ \\
			\midrule
		%\rowcolor[rgb]{0.8,0.8,0.8}
  $X_1$ & $X_1$ &  $X_2$ &  $X_3$ &  $2e^{-2\varepsilon}X_4$ \\
		$X_2$ & $X_1$  & $X_2$ & $X_3-\varepsilon X_2$ &  $X_4$ \\
		%\rowcolor[rgb]{0.8,0.8,0.8}
  $X_3$ & $X_1$  & $e^{\varepsilon}X_2$ & $X_3$ & $e^{-\varepsilon}X_4$\\
		$X_4$ & $X_1+2\varepsilon X_4$ & $X_2$ & $X_3+\varepsilon X_4$ & $X_4$ \\
			\bottomrule
		\end{tabular}
	\end{center}
\end{table}

Using the invariance condition, we can transform the original PDE \eqref{1.0.1} into an ODE for a new variable $\theta$. Below we present schematically the ODE satisfied by the new dependent variable $\theta$ and its relation with the original variable $u$:

\begin{itemize}
    \item For the generator $\al X_1+\be X_3+X_4$, the corresponding ODE is
    \bb\label{4.1.2}
    \ba{l}
\al^2(\al^2-4\be^2)\theta'-\al^2\be^2z(5\theta''+z\theta''')+4\be(4\al^3-3\al\be^2 -\beta^3)\theta^2\\
\\
-8\be^3(3\al+4\be)z^2(\theta')^2-2\be^4z^4(\theta'')^2+4\be(4\al^3-15\al\be^2-7\be^3)z\theta\theta'\\
\\
-16\be^3(2\al+\be)z^2\theta\theta''-2\be^3(2\al+\be)z^3\theta\theta'''-8\be^3(\al+3\be)z^3\theta'\theta''\\
\\
-2\be^4z^4\theta'\theta'''=0,
\ea
\ee
where $\theta=\theta(z)$, $z=te^{-\be x/\al}$ and
\bb\label{4.1.3}
u(x,t)=e^{-\be x/\al}\theta(z)+\f{e^{2x}}{2\al+\be};
\ee

    \item For the generator $\al X_1+X_3$, the corresponding ODE is
    \bb\label{4.1.4}
\ba{l}
\al^2(\al^2-4)\theta'-\al^2z(5\theta''+z\theta''')+4(4\al^3-3\al-1)\theta^2-8({-3\al}+4)z^2(\theta')^2\\
\\
-2z^4(\theta'')^2+4(4\al^3-15\al-7)z\theta\theta'-16(2\al+1)z^2\theta\theta''-8(\al+3)z^3\theta'\theta''\\
\\
-2(2\al+1)z^3\theta\theta'''-2z^4\theta'\theta'''=0,
\ea
    \ee
where $\theta=\theta(z)$, $z=te^{-x/\al}$ and
\bb\label{4.1.5}
u(x,t)=e^{- x/\al}\theta(z);
\ee    

    \item For the generator $c X_1+X_2$ (note we have replaced $\al$ by $c$), the corresponding ODE is
    \bb\label{4.1.6}
-c(\theta'-\theta''')-16\theta\theta'+8\theta'\theta''-2(\theta'')^2+4\theta\theta'''-2\theta'\theta'''=0,
    \ee
where $\theta=\theta(z)$, $z=x-c t$ and
\bb\label{4.1.7}
u(x,t)=\theta(z);
\ee

\item For the generator $X_1$, the corresponding ODE the trivial solution $u(x,t)=c$, where $c$ is an arbitrary constant.
\end{itemize}

\subsection{Explicit solutions and their corresponding metrics}

Even though the solutions of the ODEs in the previous subsection led to solutions of \eqref{1.0.1}, they are still quite complicated to be solved in general. We could try to use symmetries again and obtain simpler ODEs to be solved, and then return to the original ODE to obtain its solution, leading to the corresponding solution for \eqref{1.0.1}. This would make this a longer paper focused on solutions obtained from symmetries, that although valid, is not our purpose. Our focus is not in proceeding with extensive classifications of solutions obtained from symmetries neither are the invariant solutions our main goal. 

The symmetries and their corresponding solutions have, however, a twofold importance for us: despite the problems mentioned above, we are still able to obtain some solutions in a fairly simple way from what we have obtained sor far. These solutions then enables us to find explicit metrics given by \eqref{2.0.7}. We are about to do this. The second relevant fact is that, combined with the conserved quantities we established in the precedent section, the invariant solutions give us clues to find unexpected but remarkable weak solutions for \eqref{1.0.1}. This is the subject of the coming section.

\subsubsection{Solutions from \eqref{4.1.2} and \eqref{4.1.4}}

If we substitute $\theta(z)=az$ into \eqref{4.1.2} we obtain a solution provided that $\al=2\be$. As a result, by \eqref{4.1.3} we have
\bb\label{4.2.1}
u(x,t)=ate^{-x}+\f{e^{2x}}{5\be},
\ee
where $a\in\R$ and $\be\neq0$ are arbitrary constants. Using the last transformation shown on Table \ref{tab2} we can remove the term $e^{2x}/(5\be)$ and get a simpler solution
\bb\label{4.2.2}
u(x,t)=ate^{-x}
\ee
to \eqref{1.0.1}.

According to Theorem \ref{teo2.2}, both \eqref{4.2.1} and \eqref{4.2.2} are generic solutions for \eqref{1.0.1}. Choosing $a=1/\sqrt{72}$, from solution \eqref{4.2.2} we obtain the following first fundamental form
$g=5dx^2+2t^2 e^{-2x}dxdt+t^4e^{-4x}dt^2$ for the region $t\neq0$.

\subsubsection{Solutions from \eqref{4.1.6}}

Equation \eqref{4.1.6} can alternatively be rewritten as
\bb\label{4.2.3}
\Big(-c(\theta-\theta'')-8\theta^2+2(\theta')^2-2\theta'\theta''+4\theta\theta''\Big)'=0,
\ee
that, once integrated, yields
\bb\label{4.2.4}
-c(\theta-\theta'')-8\theta^2+2(\theta')^2-2\theta'\theta''+4\theta\theta''=c_1,
\ee
where $c_1$ is an arbitrary constant. For $c_1=0$ we can obtain two solutions, namely,
\bb\label{4.2.5}
\theta_1(z)=\al_1 e^{-z}
\ee
and
\bb\label{4.2.6}
\theta_2(z)=-\al_2 e^{2z}+\f{\sqrt{-3c\al_2}}{2}e^z.
\ee

Although the $-$ sign in \eqref{4.2.6} is not mandatory, it is convenient due to coming calculations, so that we will work with it in that way.

Solution \eqref{4.2.6} is meaningless as long as $c\al_2>0$. For this reason, henceforth we only consider the constants satisfying the constraint $c\al_2\leq0$.

A straightforward calculation shows that
$\|\theta_1\|_{H^{1}(a,b)}=|\al_1|\sqrt{e^{-2a}-e^{-2b}},$
meaning that $\theta_1\in H^1(a,\infty)$, for any $a\in\R$, but $\theta_1\notin H^1(\R)$. Similarly, we have $\theta_2\in H^1(-\infty,b)$, for any $b\in\R$, but $\theta_2\notin H^1(\R)$. Observe, however, that both $\theta_1$ and $\theta_2$ are members of $H^1_{loc}(\R)$.

The corresponding solutions for \eqref{1.0.1} obtained from \eqref{4.2.5} and \eqref{4.2.6} are, respectively, given by
\bb\label{4.2.7}
u_1(x,t)=\al_1 e^{ct-x}
\ee
and
\bb\label{4.2.8}
u_2(x,t)=-\al_2 e^{2(x-ct)}+\f{\sqrt{-3c\al_2}}{2}e^{x-ct}.
\ee

While \eqref{4.2.7} is a generic solution for $\alpha_1\neq 0$, \eqref{4.2.8} is non-generic meaning that the $1$-forms $\omega_1$ and $\omega_2$ are such that $\omega_1\wedge \omega_2=0$. For \eqref{4.2.7}, let us exemplify the metric determined by it. By choosing $\al_1=1/\sqrt{72}$, we get
$
g=5dx^2+2 e^{2(ct-x)}dxdt+e^{4(ct-x)}dt^2.
$

\begin{remark}\label{rem4.1}
    Taking into account the comments above regarding the Sobolev norms of \eqref{4.2.5} and \eqref{4.2.6}, we conclude that $t\mapsto \|u_i(\cdot,t)\|_{H^1(a,b)}$ is not constant for any $a,b\in\R$, $i=1,2$.
\end{remark}

\section{The collage process and a new conservative solution}\label{sec5}

The solutions we found in the previous section are enough to provide us concrete examples for metrics \eqref{2.0.7} of PSS. Unfortunately, they are not compatible with the conserved quantities \eqref{3.0.3}--\eqref{3.0.7} since none of them belong to $L^\infty(\R)$ for fixed $t$.

\begin{remark}\label{rem5.1}
    From Remark \ref{rem4.1} we could have piecewise bounded functions locally behaving like solutions if we suitably cut off solutions \eqref{4.2.7} and \eqref{4.2.8}, namely, 
$$
\ba{lclccclcl}
\tilde{u}_1(x,t)&=&\left\{\ba{ll}u_1(x,t),&\text{ for }x>ct,\\
\\
0,&\text{ otherwise, }
\ea\right.
&&
\tilde{u}_2(x,t)&=&\left\{\ba{ll}u_2(x,t),&\text{ for }x<ct,\\
\\
0,&\text{ otherwise,}
\ea\right.
\ea
$$
but the paid price would be an eventual loss of continuity of these functions along the straight line $x=ct$. Therefore, although serving as solutions on some open sets of $\R^2$, they cannot be regarded as such on $\R^2$.
\end{remark}

Returning our eyes back to \eqref{4.2.3}--\eqref{4.2.6} and the reported sequel, our results were only obtained by setting the constant of integration $c_1$ as being $0$. We can, however, provide a better justification for that choice by imposing that $u$ and its derivatives vanish as $|x|\to \infty$. Although we have not succeeded in finding solutions vanishing for both $x\rightarrow\infty$ and $x\rightarrow-\infty$. On the contrary, we only obtained unbounded solutions, as previously mentioned.

Solution \eqref{4.2.5} is unbounded for $z\rightarrow-\infty$, but vanishes for $z\rightarrow+\infty$, whereas \eqref{4.2.5} has a complementary behaviour. Moreover, suitably calibrating the constants $\al_1$ and $\al_2$ we can make them agree at $z=0$, providing us a continuous function that solves the equation for both $z<0$ and $z>0$ regions, but potentially having some issues at $z=0$. Such a function cannot be a strong solution for (see equation \eqref{4.2.4})
\bb\label{5.0.1}
-c(\theta-\theta'')-8\theta^2-2(\theta')^2-(\theta'^2)'+2(\theta^2)''=0,
\ee
but we may have some hope that it could serve as a (conservative) weak one taking remarks \ref{rem4.1} and \ref{rem5.1} into account.

Multiplying \eqref{5.0.1} by $\varphi\in C^\infty_0(\R)$, where $C^\infty_0(\R)$ denotes the set of $C^\infty(\R)$ compactly supported functions and proceeding with integration by parts, we obtain the weak formulation for \eqref{5.0.1}:
\bb\label{5.0.2}
-\int_\R\Big(c\theta+8\theta^2+2 (\theta')^2\Big)\varphi dz+\int_\R(\theta')^2\varphi' dz+\int_\R\Big(c\theta+2\theta^2\Big)\varphi''dz=0.
\ee

A straightforward calculation shows that if $\theta\in C^3(\R)$ is a solution for \eqref{5.0.1}, then \eqref{5.0.2} is satisfied for any $\varphi\in C_0^\infty(\R)$, showing that any strong solution for \eqref{4.2.3} is a weak solution for \eqref{5.0.1}.

The way we cut the functions in Remark \ref{rem5.1} is not usually compatible with continuity (except for trivial solutions, that are, of course, out of our interest), but those new functions give us some clues about how to proceed to look for at least a continuous function from those two we already know.

Let $H(x)$ be the Heaviside step function, that is, 
\bb\label{5.0.3}
H(x)=\left\{
\ba{lcl}
1,&\text{if}& x>0,\\
\\
\ds{\f{1}{2}},&\text{if}& x=0,\\
\\
0,&\text{if}& x<0,
\ea
\right.
\ee
and consider the functions \eqref{4.2.5}--\eqref{4.2.6}. They have three constants involved, being $c$ the one corresponding to the wave speed, see \eqref{4.2.7} and \eqref{4.2.8}. Then we fix it, but we still have two degrees of freedom, that we reduce to one by choosing $\al_2$ compatible with the constraint $c\al_2\leq 0$.
\begin{figure}[h!]
	\centering
	\begin{subfigure}{0.45\linewidth}
		\includegraphics[width=\linewidth]{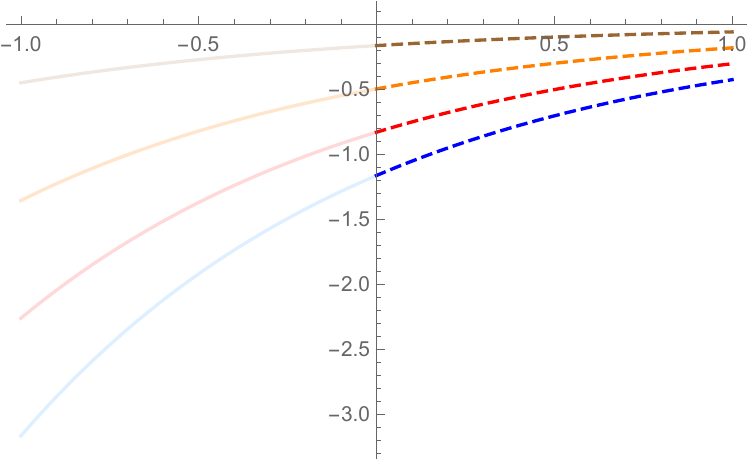}
		\caption{Function \eqref{4.2.5} with $\al_1=-c/6$ for different values of $c>0$. The light solid part of the function becomes unbounded as $z\rightarrow-\infty$, whereas the dashed part remains bounded.}
		\label{subfigA}
	\end{subfigure}\quad\quad\quad
	\begin{subfigure}{0.45\linewidth}
		\includegraphics[width=\linewidth]{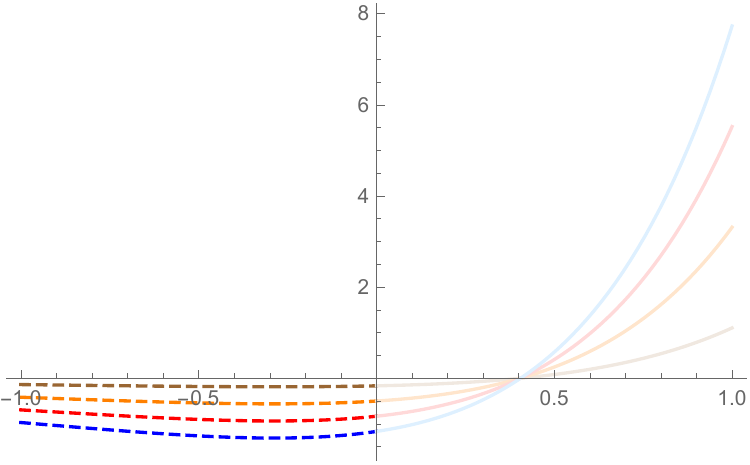}
		\caption{Function \eqref{4.2.6} with $\al_2=-c/3$, for different values of $c>0$. The light solid part of the function becomes unbounded as $z\rightarrow+\infty$, whereas the dashed part is bounded.}
		\label{subfigB}
	\end{subfigure}
	\begin{subfigure}{0.65\linewidth}
	        \includegraphics[width=\linewidth]{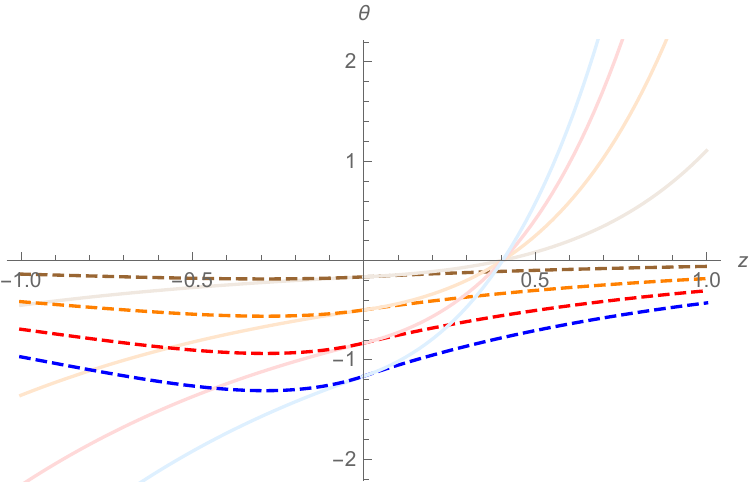}
	        \caption{A surgery fusing two solutions giving rise to a new function: we divide the solutions with respect to $z=0$ and discard their unbounded parts. The two remained bounded parts are continuously glued at $z=0$ using the weak formulation. The result is a bounded and continuous solution of the weak formulation \eqref{5.0.2}.}
	        \label{subfigC}
         \end{subfigure}
	\caption{Figure \ref{subfigC} shows how the collage process eliminates unbounded parts in a continuous process and produces a new, continuous and bounded, weak solution for \eqref{5.0.1} from the classical and unbounded functions \eqref{4.2.5} and \eqref{4.2.6}.}
	\label{fig1}
\end{figure}

We can use the Heaviside function to produce two bounded, but discontinuous, new functions
\bb\label{5.0.4}
\Theta_1(z)=\al_1 e^{-z}H(z)\quad\text{and}\quad \Theta_2(z)=\Big( \f{c}{3}e^{2z}-\f{c}{2}e^z\Big)H(-z).
\ee

Function $\Theta_1$ is nothing but $\theta_1$ cut by $H(x)$ while $\Theta_2$ is obtained from \eqref{4.2.6} after choosing $\al_2=-c/3$, $c>0$, and then cutting it by $H(-x)$, see figures \ref{subfigA} and \ref{subfigB}.

We can ``create'' a new function $\Theta$ ``satisfying'' \eqref{5.0.1} for both $x<0$ and $x>0$ by adding $\Theta_1$ and $\Theta_2$, that is, $\Theta=\Theta_1+\Theta_2$. Moreover, we can make $\Theta$ continuous at $z=0$ by choosing $\al_1=-c/6$, namely,
\bb\label{5.0.5}
    \Theta(z)=-\f{c}{6} e^{-z}H(z)+\Big( \f{c}{3}e^{2z}-\f{c}{2}e^z\Big)H(-z).
    \ee

The steps we took for obtaining \eqref{5.0.5} from \eqref{4.2.5} and \eqref{4.2.6} are shown the Figure \ref{fig1}.

We can go even further and conclude that $\Theta$ is continuously differentiable ($C^1$) by noticing the following: first, $\Theta_1$ and $\Theta_2$ are $C^\infty$ for $z>0$ and $z<0$, respectively. Second, we have 
$$
\Theta'(\epsilon)=\f{c}{6}e^{-\epsilon}\quad\text{and}\quad\Theta'(-\epsilon)=\f{2c}{3}e^{-2\epsilon}-\f{c}{2}e^{-\epsilon},
$$
for any $\epsilon>0$, which implies the continuity of $\Theta'(\cdot)$ at $z=0$, since
$$
\Theta'(0^+)=\lim_{\epsilon\rightarrow0}\Theta'(\epsilon)=\f{c}{6}=\lim_{\epsilon\rightarrow0}\Theta'(-\epsilon)=\Theta'(0^-).
$$
We cannot go beyond $C^1$ regularity since \eqref{5.0.5} has a jump at $z=0$, see Figure \ref{fig3}.
\begin{figure}[h!]
	\centering
		\includegraphics[width=0.75\linewidth]{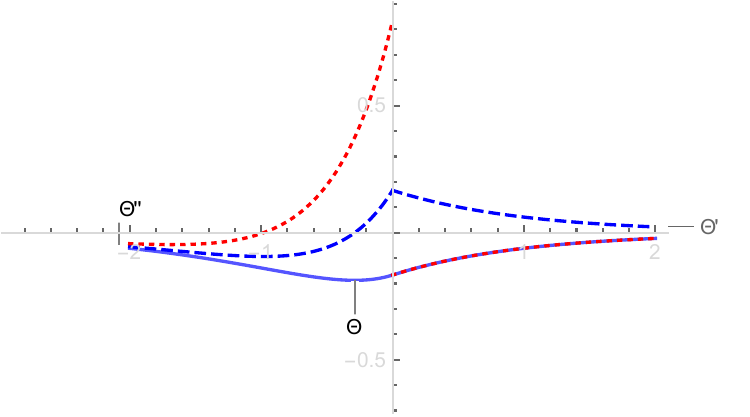}
	
	\caption{The light blue line represents the graph of the function $\Theta$ \eqref{5.0.5} while the dashed line shows its first derivative $\Theta'$. The peak at $z=0$ for $\Theta'$ makes evident issues at $z=0$, which is shown by the graph of the second derivative (red dotted   line).}
	\label{fig3}
\end{figure}

The new function above, emerging from the fusion between the bounded parts of $\Theta_1$ and $\Theta_2$ in such a way that continuity is preserved, somewhat behaves like a strong solution for \eqref{5.0.1} outside $z=0$. We can precisely make it a solution throughout our coming result.

\begin{theorem}\label{teo5.1}
Let $\Theta:\R\rightarrow\R$ be the function given by \eqref{5.0.5}. Then it is a weak solution for \eqref{5.0.1}.
\end{theorem}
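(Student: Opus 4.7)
The plan is to verify directly that $\Theta$ satisfies the weak formulation \eqref{5.0.2} for every test function $\varphi\in C_0^\infty(\R)$, by reducing the computation to the two half-lines $z<0$ and $z>0$, on each of which $\Theta$ is already a classical solution of \eqref{5.0.1}.

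First I would split each of the three integrals in \eqref{5.0.2} as $\int_\R = \int_{-\infty}^0+\int_0^\infty$. On each subinterval, $\Theta$ coincides with one of the smooth classical solutions $\theta_1$ or $\theta_2$ from \eqref{4.2.5}--\eqref{4.2.6} (with the specific choices $\al_1=-c/6$ and $\al_2=-c/3$), so $\Theta$ is $C^\infty$ there and honest classical integration by parts applies. Then I would undo the integration by parts that produced \eqref{5.0.2}: in the term $\int(c\Theta+2\Theta^2)\varphi''\,dz$ I integrate by parts twice, and in the term $\int(\Theta')^2\varphi'\,dz$ I integrate by parts once, on each half-line separately. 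This produces boundary contributions at $z=0$ (the boundary contributions at $\pm\infty$ vanish by compact support of $\varphi$) and interior integrals whose integrands recombine to exactly the left-hand side of \eqref{5.0.1}.

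Collecting the boundary terms from both half-lines yields
\begin{equation*}
\bigl(g(0^-)-g(0^+)\bigr)\varphi'(0)+\bigl(g'(0^+)-g'(0^-)\bigr)\varphi(0)+\bigl(h(0^-)-h(0^+)\bigr)\varphi(0),
\end{equation*}
where $g=c\Theta+2\Theta^2$ and $h=(\Theta')^2$. Since $\Theta\in C^1(\R)$ by the computation preceding Figure \ref{fig3}, all of $g$, $g'=(c+4\Theta)\Theta'$ and $h=(\Theta')^2$ are continuous at $z=0$, so every one of these boundary contributions vanishes. The remaining interior integral is $\int_\R\mathcal{L}[\Theta]\,\varphi\,dz$, where $\mathcal{L}[\Theta]$ is the pointwise value of the left-hand side of \eqref{5.0.1}; this is identically zero on $z<0$ and on $z>0$ because $\Theta$ is a classical solution of \eqref{5.0.1} on each half-line, and the single point $z=0$ has Lebesgue measure zero. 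Consequently \eqref{5.0.2} holds for every $\varphi\in C_0^\infty(\R)$.

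The only delicate point is the bookkeeping of the boundary contributions at $z=0$, since the second derivative $\Theta''$ genuinely jumps there (as is visible in Figure \ref{fig3}). The key observation making the argument work is that the weak formulation \eqref{5.0.2} was deliberately designed so that no derivative of $\Theta$ of order higher than one appears, so that only the continuity of $\Theta$ and $\Theta'$ (and not of $\Theta''$) is needed for the boundary terms to cancel. I would emphasise this point, since it is the very reason the collage procedure that produces $\Theta$ is compatible with the weak formulation but not with the strong one.
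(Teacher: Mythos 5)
Your proof is correct and follows essentially the same route as the paper's: split the integral at the gluing point, use that $\Theta$ is a classical solution on each half-line, integrate by parts, and observe that the boundary contributions at $z=0$ cancel. The only (cosmetic) difference is that the paper excises an $\epsilon$-window around $z=0$ and computes the two one-sided boundary terms explicitly (obtaining $\mp\frac{c^2}{36}\varphi(0)\mp\frac{c^2}{9}\varphi'(0)$, which cancel in the limit), whereas you identify the cancellation structurally by noting that every boundary term involves only $c\Theta+2\Theta^2$, $(c+4\Theta)\Theta'$ and $(\Theta')^2$, all continuous at $z=0$ by the $C^1$ regularity of $\Theta$ — a slightly cleaner packaging of the same computation.
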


\begin{proof}
All we need to do is to show that \eqref{5.0.5} satisfies \eqref{5.0.2}. At first glance one should expect it to have problems in view of the presence of the term $\Theta'$ in \eqref{5.0.2}. However, we observe that \eqref{5.0.5} is a $C^1$ function and, as a such, that term is well behaved and does not bring any issue at all.

Let $\varphi\in C_0^\infty$, $\Theta$ and $c$ as in \eqref{5.0.5}, and define
$$
I:=\int_\R\Big(c\Theta+8\Theta^2+2 (\Theta')^2\Big)\varphi dz-\int_\R(\Theta')^2\varphi' dz-\int_\R\Big(c\Theta+2\Theta^2\Big)\varphi''dz.
$$
Our goal is to show that $I\equiv0$. To this end, let us first fix $\epsilon\in(0,1)$ and split $I$ as
$$
I=I_{-}(\epsilon)+I_{+}(\epsilon)+I(\epsilon),
$$
where
$$
I_{-}(\epsilon)=\int_{-\infty}^{-\epsilon}\Big(c\Theta+8\Theta^2+2 (\Theta')^2\Big)\varphi dz-\int_{-\infty}^{-\epsilon}(\Theta')^2\varphi' dz-\int_{-\infty}^{-\epsilon}\Big(c\Theta+2\Theta^2\Big)\varphi''dz,
$$
$$
I_{+}(\epsilon)=\int^{\infty}_{\epsilon}\Big(c\Theta+8\Theta^2+2 (\Theta')^2\Big)\varphi dz-\int^{\infty}_{\epsilon}(\Theta')^2\varphi' dz-\int^{\infty}_{\epsilon}\Big(c\Theta+2\Theta^2\Big)\varphi''dz
$$
and
$$
I(\epsilon)=\int_{-\epsilon}^{\epsilon}\Big(c\Theta+8\Theta^2+2 (\Theta')^2\Big)\varphi dz-\int_{-\epsilon}^{\epsilon}(\Theta')^2\varphi' dz-\int_{-\epsilon}^{\epsilon}\Big(c\Theta+2\Theta^2\Big)\varphi''dz.
$$

Since $\epsilon\in(0,1)$, $\Theta\in C^1(\R)$ and $\varphi\in C^\infty(\R)$, we can take
$$
M:=\max_{x\in[-1,1]}\{|\Theta(x)|,|\Theta'(x)|,|\varphi(x)|,|\varphi'(x)|,|\varphi''(x)|\}
$$
and define 
$$
K:=2(2cM^2+13M^3).
$$
A straightforward manipulation shows that
$$|I(\epsilon)|\leq \int_{-\epsilon}^{\epsilon}\Big(|c||\Theta|+8|\Theta|^2+2 |\Theta'|^2\Big)|\varphi| dz+\int_{-\epsilon}^{\epsilon}|\Theta'|^2|\varphi'| dz+\int_{-\epsilon}^{\epsilon}\Big(|c||\Theta|+2|\Theta|^2\Big)|\varphi''|dz\leq K\epsilon,$$
showing that $I(\epsilon)\rightarrow0$ as $\epsilon\rightarrow0$. All we have to do is to show that $I_{+}(\epsilon)+I_{-}(\epsilon)$ go to $0$ as $\epsilon\rightarrow0$.

A fairly lengthy calculation shows that
$$
I_+(\epsilon)=c^2\Big(\f{5}{36}e^{-2\epsilon}-\f{1}{6}e^{-\epsilon}\Big)\varphi(\epsilon)+c^2\Big(\f{1}{6}e^{-\epsilon}-\f{1}{18}e^{-2\epsilon}\Big)\varphi'(\epsilon)\xlongrightarrow{\epsilon\rightarrow 0}-\f{c^2}{36}\varphi(0)-\f{c^2}{9}\varphi'(0)
$$
and
$$
\ba{lcl}
I_-(\epsilon)&=&\ds{c^2\Big(-\f{1}{2}e^{-\epsilon}+\f{17}{12}e^{-2\epsilon}-\f{4}{3}e^{-3\epsilon}+\f{4}{9}e^{-4\epsilon}\Big)\varphi(-\epsilon)}\\
\\
&&\ds{+c^2\Big(\f{1}{2}e^{-\epsilon}-\f{5}{6}e^{-2\epsilon}+\f{2}{3}e^{-3\epsilon}-\f{2}{9}e^{-4\epsilon}\Big)\varphi'(-\epsilon)\xlongrightarrow{\epsilon\rightarrow 0} \f{c^2}{36}\varphi(0)+\f{c^2}{9}\varphi'(0),}
\ea
$$
implying that $I_{+}(\epsilon)+I_{-}(\epsilon)\rightarrow0$ as $\epsilon\rightarrow0$ and so does $I\equiv0$.
\end{proof}

Let us find the Fourier transform $\hat{\Theta}$ of the function \eqref{5.0.5}. A simple calculation shows that
$$
\ba{lcl}
\hat{\Theta}(k)&=&\ds{\f{1}{\sqrt{2\pi}}\int_\R e^{-ikz}\Theta(z)dz=\f{c}{6\sqrt{2\pi}}\int_{-\infty}^0e^{-i(k+1)z}dz}\\
\\
&=&\ds{+\f{c}{6\sqrt{2\pi}}\int^{+\infty}_0\Big(2e^{-i(k-2)z}-3e^{-i(k-1)z}\Big)dz=\f{i}{\sqrt{2\pi}}\f{1}{k^3-2ik^2+k-2i}}.
\ea
$$

Calculating $|\hat{\Theta}(k)|^2$, multiplying by $(1+k^2)^s$, $s\in\R$, and integrating the result over $\R$, we obtain
$$
\int_\R (1+k^2)^s|\hat{\Theta}(k)|^2dk=\f{1}{2\pi}\int_\R \f{(1+k^2)^s}{(k^3+k^2)^2+(2k^2+2)^2}dk.
$$

For $|k|\gg 1$, the corresponding leading terms of the polynomials appearing in the integral above are $k^{2s}$ and $k^6$. Therefore, the integral above is convergent provided that $2s-6<-1$, that is, $s<5/2$.

 We recall that given $s\in\R$, the Sobolev space $H^s(\R)$ is the set of distributions $f$ for which $(1+k^2)^{s/2}\hat{f}(k)\in L^2(\R)$. Moreover, for $s\geq t$ we have $H^s(\R)\hookrightarrow H^t(\R)$ and the comments above then show that \eqref{5.0.5} belongs to $H^s(\R)$, for any $s<5/2$. Essentially, we proved the following result.
\begin{theorem}\label{teo5.2}
    For any $\epsilon\in[0,1)$, the solution \eqref{5.0.5} for the weak formulation \eqref{5.0.2} of the equation \eqref{1.0.1} belongs to $H^{\f{3}{2}+\epsilon}(\R)$, $\epsilon\in[0,1)$.
\end{theorem}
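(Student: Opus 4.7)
The plan is to characterize $H^s(\mathbb{R})$-membership of $\Theta$ by direct Fourier analysis, since $\Theta$ is an elementary piecewise-exponential function with compactly describable pieces. Recall that, by definition, a tempered distribution $f$ belongs to $H^s(\mathbb{R})$ iff $(1+k^2)^{s/2}\hat{f}\in L^2(\mathbb{R})$, so the whole problem reduces to checking for which $s$ the integral
\[
\int_{\R}(1+k^2)^s |\hat{\Theta}(k)|^2\,dk
\]
is finite.

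First I would compute $\hat{\Theta}$ explicitly. Using the piecewise formula \eqref{5.0.5}, I split the Fourier integral at $z=0$. On $(-\infty,0)$ the integrand has the form $(\tfrac{c}{3}e^{2z}-\tfrac{c}{2}e^z)e^{-ikz}$, and on $(0,\infty)$ it reduces to $-\tfrac{c}{6}e^{-(1+ik)z}$. Each of the three resulting integrals is an elementary exponential whose convergence at the relevant infinity is guaranteed by the sign of the real part of the exponent. A routine partial-fraction-style combination produces the single rational expression
\[
\hat{\Theta}(k)=\frac{i}{\sqrt{2\pi}}\,\frac{1}{k^3-2ik^2+k-2i},
\]
already announced in the text preceding the statement.

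Next I would compute $|\hat{\Theta}(k)|^2$ by separating real and imaginary parts of the denominator: $k^3-2ik^2+k-2i=(k^3+k)-i(2k^2+2)$, so the modulus squared of the denominator equals $(k^3+k)^2+(2k^2+2)^2$. Multiplying by $(1+k^2)^s$ gives the integrand
\[
\frac{1}{2\pi}\,\frac{(1+k^2)^s}{(k^3+k)^2+(2k^2+2)^2}.
\]
The denominator never vanishes on $\R$ (both summands are non-negative and the second is strictly positive), so the integrand is continuous and no singularity arises at finite $k$.

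The remaining step is the asymptotic analysis at $|k|\to\infty$. The denominator behaves like $k^6$ and the numerator like $k^{2s}$, hence the integrand decays like $k^{2s-6}$, which is integrable at infinity iff $2s-6<-1$, i.e.\ $s<5/2$. Setting $s=\tfrac{3}{2}+\epsilon$ yields precisely the range $\epsilon\in[0,1)$ claimed in the theorem, and the continuous embedding $H^s\hookrightarrow H^{s'}$ for $s\ge s'$ finishes the argument. I expect no genuine obstacle here: the proof is entirely computational, and the only point requiring a little care is verifying the sign conventions and the absence of cancellation in the combination of the three elementary Fourier integrals that gives the compact rational form of $\hat{\Theta}$.
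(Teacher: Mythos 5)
Your proof is correct and follows essentially the same route as the paper: compute $\hat{\Theta}$ explicitly as a rational function of $k$, observe the denominator of $|\hat\Theta(k)|^2$ is a nonvanishing degree-$6$ polynomial, and conclude integrability of $(1+k^2)^s|\hat\Theta(k)|^2$ exactly for $s<5/2$, hence membership in $H^{3/2+\epsilon}(\R)$ for $\epsilon\in[0,1)$ by the embedding $H^s\hookrightarrow H^{s'}$, $s\ge s'$. Your modulus computation $(k^3+k)^2+(2k^2+2)^2$ is in fact the correct one (the paper's displayed $(k^3+k^2)^2$ is a typo), and this does not affect the leading-order asymptotics.
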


According to the Sobolev Embedding Theorem, see \cite[Theorem 6.21]{const-book}) or \cite[Proposition 1.2, page 317]{taylor}, if $s>m+1/2$, for some integer $m\geq0$, then $f\in H^s(\R)$ implies $f\in C^m(\R)\cap L^\infty(\R)$. Combining this fact with Theorem \ref{teo5.2}, we have
\begin{corollary}\label{cor5.1}
    Solution \eqref{5.0.5} is a $C^1$ function.
\end{corollary}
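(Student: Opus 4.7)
The plan is essentially a one-line application of the Sobolev Embedding Theorem to the conclusion of Theorem \ref{teo5.2}. The embedding states that whenever $s > m + 1/2$ for some non-negative integer $m$, one has $H^s(\R) \hookrightarrow C^m(\R) \cap L^\infty(\R)$. To reach the $C^1$ conclusion I would take $m = 1$, which requires some $s > 3/2$.

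Theorem \ref{teo5.2} delivers $\Theta \in H^{3/2 + \epsilon}(\R)$ for any $\epsilon \in [0,1)$, so I would fix any $\epsilon \in (0,1)$ — for definiteness, $\epsilon = 1/2$, which places $\Theta$ in $H^{2}(\R)$. Since $2 > 3/2$, the embedding immediately yields $\Theta \in C^1(\R) \cap L^\infty(\R)$, which is exactly the claim of the corollary.

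There is no genuine obstacle here: the only mildly subtle point is that the endpoint case $\epsilon = 0$ in Theorem \ref{teo5.2} corresponds to $H^{3/2}(\R)$, the critical index at which the embedding into $C^1(\R)$ need not hold. But Theorem \ref{teo5.2} provides the full open range $\epsilon \in (0,1)$, so any strictly positive choice of $\epsilon$ bypasses this sharpness issue. This conclusion also reconfirms, from the Sobolev-analytic viewpoint, the explicit pointwise computation made earlier in the section showing that $\Theta'(0^{+}) = \Theta'(0^{-}) = c/6$, where $C^1$ regularity at the gluing point was verified by hand; the corollary simply repackages that observation as an instance of the embedding theorem.
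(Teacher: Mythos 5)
Your proposal is correct and follows exactly the paper's route: the corollary is obtained by combining Theorem \ref{teo5.2} with the Sobolev Embedding Theorem ($s>m+1/2$ gives $H^s(\R)\hookrightarrow C^m(\R)\cap L^\infty(\R)$), choosing any $\epsilon>0$ so that $3/2+\epsilon>3/2$. Your remark about the critical endpoint $\epsilon=0$ is a sensible clarification but does not change the argument.
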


We have proved that \eqref{5.0.5} is a weak solution for \eqref{5.0.1}. It is then natural to ask if it might provide a weak solution for \eqref{1.0.1}.

Equation \eqref{1.0.1} can formally be rewritten as a non-local, fully non-linear evolution equation, that is,
\bb\label{5.0.9}
u_t-4uu_x+u_x^2-\p_x\Lambda^{-2}\Big(6u^2+2u_x^2\Big)-\Lambda^{-2}u_x^2=0,
\ee
where the operators $\Lambda^{-2}$ and $\p_x\Lambda^{-2}$ are, respectively, given by the convolutions 
$$\Lambda^{-2}(f)=g\ast f\quad\text{and}\quad \p_x\Lambda^{-2}(f)=g'\ast f,$$

\bb\label{5.0.10}
g(x)=\f{e^{-|x|}}{2},
\ee
and
\bb\label{5.0.11}
g'(x)=-\sign{(x)}\f{e^{-|x|}}{2},
\ee
the latter derivative being taken in the distributional sense.

Multiplying \eqref{5.0.9} by $\varphi\in C^\infty_0([0,T)\times\R)$, where $C^\infty_0([0,T)\times\R)$ means that the set of infinitely smooth compactly supported functions defined on $[0,T)\times\R$, integrating with respect to $t$ from $0$ to $T$ and $x$ over $\R$, respectively, we obtain
\bb\label{5.0.12}
\ba{l}
\ds{\int_\R u(x,0)\varphi(x,0)dx+\int_0^T \int_\R \Big(\Lambda^{-2}u_x^2\varphi\Big)(x,t) dx dt}\\
\\
\ds{
+\int_0^{T}\int_\R\Big( u\varphi_t-2u^2\varphi_x-u_x^2\varphi-\Lambda^{-2}(6u^2+2u_x^2)\varphi_x\Big)(x,t) dxdt=0.}
\ea
\ee

A function $u=u(x,t)$ satisfying \eqref{5.0.12} is said to be a local weak solution for \eqref{5.0.9} whenever it can only be defined for some $0<T<\infty$, being called {\it global} whenever $T=\infty$. 

It is again straightforward to check that any strong solution for \eqref{1.0.1} satisfies \eqref{5.0.12} (begin local for $T<\infty$ or global otherwise). Moreover, if we define $u(x,t)=\Theta(x-ct)$, where $\Theta$ is given by \eqref{5.0.5}, we then easily check that
\bb\label{5.0.13}
u(x,t)=-\f{c}{6} e^{ct-x}H(x-ct)+\Big( \f{c}{3}e^{2(x-ct)}-\f{c}{2}e^{x-ct}\Big)H(ct-x)
\ee
is a global weak solution for \eqref{1.0.1}.

We have $u\in C^0([0,\infty), H^{\f{3}{2}+\epsilon})\cap C^1([0,\infty), H^{\f{1}{2}+\epsilon})$, $\epsilon\in[0,1)$ that, in particular, tells us that $u$ is a strong solution for \eqref{5.0.9}, but it cannot be a strong solution for \eqref{1.0.1} because it is not $C^3$ with respect to $x$.

In line with Corollary \ref{cor5.1}, we prove the following:

\begin{theorem}\label{teo5.3}
    Function \eqref{5.0.13} is a $C^1([0,\infty)\times\R)$ global strong solution for the non-local, fully non-linear equation \eqref{5.0.9}. Furthermore, it is a global weak solution for \eqref{1.0.1}.
\end{theorem}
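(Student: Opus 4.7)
The plan is to first establish $C^1$ regularity of $u$, then show that $u$ is a pointwise (strong) solution of the non-local equation \eqref{5.0.9}, and finally deduce the weak formulation \eqref{5.0.12} of \eqref{1.0.1} by integration by parts. The $C^1$ statement is immediate from Corollary \ref{cor5.1}: since $\Theta\in C^1(\R)$, composing with the smooth map $(x,t)\mapsto x-ct$ gives $u\in C^1([0,\infty)\times\R)$, which is the claimed regularity.

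To verify \eqref{5.0.9} pointwise, I write its LHS as $F(x-ct)$, where
$$F(z):=-c\Theta'(z)-4\Theta(z)\Theta'(z)+(\Theta'(z))^2-\p_z\Lambda^{-2}\big(6\Theta^2+2(\Theta')^2\big)(z)-\Lambda^{-2}\big((\Theta')^2\big)(z),$$
and show $F\equiv 0$. Since $\Theta\in C^1$ and the non-local terms satisfy $(\Lambda^{-2}h)''=\Lambda^{-2}h-h$ for continuous $h$, both convolutions are $C^2$, so $F$ is continuous on $\R$. Exponential decay of $\Theta$ and $\Theta'$ together with dominated convergence for the convolutions gives $F(z)\to 0$ as $|z|\to\infty$. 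On $\R\setminus\{0\}$ the function $\Theta$ is smooth and satisfies the local travelling-wave ODE \eqref{4.1.6} strongly; applying $(1-\p_z^2)=\Lambda^2$ to the definition of $F$ there and using $\Lambda^2\Lambda^{-2}=\mathrm{Id}$ collapses the non-local terms and reproduces exactly the LHS of \eqref{4.1.6}, which vanishes. Hence $(1-\p_z^2)F$ is a distribution supported on $\{0\}$; continuity of $F$ forces it to have the form $\alpha\,\delta_0$, and the unique bounded decaying solution of $(1-\p_z^2)h=\alpha\,\delta_0$ is $h=\alpha g$, so $F(z)=\tfrac{\alpha}{2}e^{-|z|}$.

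The main obstacle is to pin down the constant $\alpha$. I would do this by evaluating $F(0)=\alpha/2$ directly from \eqref{5.0.5}: on each half-line $\Theta$ is a finite linear combination of exponentials $e^{ky}$, so $(g*\Theta^2)(0)$, $(g*(\Theta')^2)(0)$, $(g'*\Theta^2)(0)$ and $(g'*(\Theta')^2)(0)$ reduce to elementary integrals of the form $\int_0^\infty e^{-ay}dy$. A careful bookkeeping of the resulting rational fractions yields $F(0)=0$, hence $\alpha=0$ and $F\equiv 0$ on $\R$. Although completely algorithmic, this residue computation at $z=0$ is where essentially all the work lies, and it mirrors the cancellation already exploited in the proof of Theorem \ref{teo5.1}.

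Finally, to obtain \eqref{5.0.12}, I multiply \eqref{5.0.9} by $\varphi\in C_0^\infty([0,T)\times\R)$, integrate over $[0,T)\times\R$, and integrate by parts three times: once in $t$ on $u_t\varphi$ (producing $-u\varphi_t$ and the initial term $\int_\R u(x,0)\varphi(x,0)dx$), once in $x$ on $4uu_x\varphi=2(u^2)_x\varphi$ (producing $2u^2\varphi_x$), and once in $x$ on $\p_x\Lambda^{-2}(6u^2+2u_x^2)\,\varphi$ (producing $\Lambda^{-2}(6u^2+2u_x^2)\varphi_x$). All of these manipulations require only $u\in C^1$ with the exponential decay inherited from $\Theta$, which we have; boundary contributions at spatial infinity vanish by decay, and those at $t=T$ vanish by compact support of $\varphi$. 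Rearranging the resulting identity reproduces \eqref{5.0.12} verbatim, completing the proof that \eqref{5.0.13} is a global weak solution of \eqref{1.0.1}.
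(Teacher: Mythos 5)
Your proposal is correct, and it actually supplies more of an argument than the paper does: the paper essentially asserts Theorem \ref{teo5.3} as a consequence of Theorem \ref{teo5.1} (where the weak travelling-wave formulation \eqref{5.0.2} is verified by an explicit $\epsilon$-splitting of the integrals, with the boundary contributions from $I_+(\epsilon)$ and $I_-(\epsilon)$ cancelling in the limit) together with the regularity statements of Theorem \ref{teo5.2} and Corollary \ref{cor5.1}. Your route is genuinely different for the ``strong solution of \eqref{5.0.9}'' half: you prove the pointwise identity $F\equiv 0$ by observing that $(1-\p_z^2)F$ is a distribution supported at the origin, that continuity of $F$ rules out derivatives of $\delta_0$, and that the decaying solution of $(1-\p_z^2)F=\alpha\,\delta_0$ is $\alpha g$; this makes the strong-solution claim for the non-local equation transparent in a way the paper's weak-formulation computation does not. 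The only place where you defer work is the evaluation $F(0)=0$; note that this can be bypassed entirely, since the only sources of $\delta_0$ in $(1-\p_z^2)F$ are the terms containing $\Theta'''$, whose distributional singular part is $\bigl(c+4\Theta(0)-2\Theta'(0)\bigr)\bigl[\Theta''\bigr]_{0}\,\delta_0$ with $[\Theta'']_0$ the jump of $\Theta''$ at $z=0$; since $\Theta(0)=-c/6$ and $\Theta'(0)=c/6$, the prefactor is $c-\tfrac{2c}{3}-\tfrac{c}{3}=0$, so $\alpha=0$ without any convolution integrals. With that observation your argument is complete and, if anything, cleaner than the paper's; the final integration by parts yielding \eqref{5.0.12} is routine and matches the paper's derivation of the weak formulation.
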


Even though \eqref{5.0.13} is a weak solution for the original equation \eqref{1.0.1}, it is simultaneously a strong solution for the corresponding non-local evolution form of this equation. To make things even more curious, the shape of the solution has a smooth peak, see Figure \ref{subfigC}, but rather than only continuous, it is a smoother peakon, or a ``$C^1$ peakon''.

This phenomenon seems to be hardly previously observed, but not completely unseen. The paper by Li and Qiao \cite{li-jmp} is the first work mentioning such a fact, whereas \cite{qiao2018} seems to be the first reporting an explicit form for this sort of solution. The authors of \cite{li-jmp,qiao2018} coined the term {\it pseudo-peakons} to name that sort of solutions. We shall maintain the same terminology in the present work as well as the term {\it $C^1$ peakon} as used before.

\begin{remark}\label{rem5.2}
    The Heaviside function $H(\cdot)$ can be removed from \eqref{5.0.5} so that it can equivalently be rewritten as
    \bb\label{5.0.11}
    \Theta(z)=-\f{c}{2}e^{-|z|}+\f{c}{3}e^{-\f{|z|}{2}(3-\sign{(z)})},
    \ee
    that makes evident the peakon behavior of \eqref{5.0.5}.

    In spite of the presence of the $\sign$ function in \eqref{5.0.11}, it is still continuous at $z=0$. In fact, if $\epsilon>0$, then
    $$\Theta(\epsilon)=-\f{c}{6}e^{-\epsilon}\xlongrightarrow{\epsilon\rightarrow 0}-\f{c}{6}\quad\text{and}\quad \Theta(-\epsilon)=\f{c}{3}e^{-2\epsilon}-\f{c}{2}e^{-\epsilon}\xlongrightarrow{\epsilon\rightarrow 0}-\f{c}{6}, $$
    showing its continuity at $z=0$.

    The distributional derivative of $\Theta$ is given by
    $$\Theta'(z)=\f{c}{2}\sign{(z)}e^{-|z|}+\f{c}{6}(1-3\sign{(z)})e^{-\f{|z|}{2}(3-\sign{(z)})}.$$

    Once more we have the presence of the $\sign$ function in $\Theta'$, that might bring issues at $z=0$. However, if $\epsilon>0$, then
    $$\Theta'(\epsilon)=\f{c}{2}e^{-\epsilon}-\f{c}{3}e^{-\epsilon}\xlongrightarrow{\epsilon\rightarrow 0}\f{c}{6}\quad\text{and}\quad \Theta'(-\epsilon)=-\f{c}{2}e^{-\epsilon}+\f{2}{3}ce^{-2\epsilon}\xlongrightarrow{\epsilon\rightarrow 0}\f{c}{6},$$
    and thus, $\Theta'$ is still continuous.
\end{remark}

\begin{remark}\label{rem5.3}
  Due to the presence of the $\sign$ function in \eqref{5.0.11}, we would expect the presence of a delta function $\delta$ in $\Theta'$. However, we observe that $(|z|\sign{(z)})'=1$. 
\end{remark}

\section{Generalisations of the pseudo-peakon solution}\label{sec6}

Solution \eqref{5.0.5}, found by the collage process, leads us wonder if we might extend it to a more general solitary weak wave solution for \eqref{1.0.1}. Let us consider the ansatz
\bb\label{6.0.1}
u(x,t)=p_1(t)e^{-x+q(t)}H(x-q(t))+\Big(p_2(t)e^{x-q(t)}+p_3(t)e^{2(x-q(t))}\Big)H(q(t)-x),
\ee
where $q,p_1,p_2$ and $p_3$ are $C^1$ functions to be determined and $H$ denotes again the Heaviside step function, and seek for a conservative solution. By conservative we mean a solution having some invariant conserved quantity. 

Taking the distributional derivatives of \eqref{6.0.1}, we obtain
\bb\label{6.0.2}
\ba{lcl}
u_t(x,t)&=&p_1'(t)e^{-x+q(t)}H(x-q(t))+\Big(p_2'(t)e^{x-q(t)}+p_3'(t)e^{2(x-q(t))}\Big)H(q(t)-x)\\ \\
&& + q'(t)p_1(t)e^{-x+q(t)}H(x-q(t)) -q'(t)(p_2(t)e^{x-q(t)} + 2p_3(t)e^{2(x-q(t))})H(q(t)-x)\\ \\
&&-q'(t)p_1(t)\delta(x-q(t))+q'(t)(p_2(t)+p_3(t))\delta(q(t)-x),
\ea
\ee
and 
\bb\label{6.0.3}
\ba{lcl}
u_x(x,t)&=&\ds{(p_1(t)-p_2(t)-p_3(t))\delta(x-q(t))}\\
\\
&&\ds{-p_1(t)e^{-x+q(t)}H(x-q(t))+\Big(p_2(t)e^{x-q(t)}+2e^{2(x-q(t))}p_3(t)\Big)H(q(t)-x)},
\ea
\ee
where $\delta$ denotes the Dirac delta function, whose presence makes the term $u_x^2$ in \eqref{5.0.9} ill defined in view of the product of the distribution. We can overcome this problem by assuming 
\bb\label{6.0.4}
p_1(t)=p_2(t)+p_3(t)
\ee
and removing it from \eqref{6.0.2} and \eqref{6.0.3}.

It is worth mentioning that condition \eqref{6.0.4} makes function \eqref{6.0.1} continuous along the curves $t\mapsto(q(t),t)$. Moreover, substituting \eqref{6.0.2}--\eqref{6.0.3} into \eqref{5.0.9} (or even into \eqref{5.0.12}), taking into account \eqref{5.0.10}--\eqref{5.0.11}, and proceeding with all due calculations, we conclude that \eqref{6.0.1} is a solution for \eqref{5.0.9} provided that $p_1, p_2,p_3$ and $q$ satisfy the continuous dynamical system
\bb\label{6.0.5}
\ba{l}
\ds{p_2'(t)+p_3'(t)+(p_2(t)+p_3(t))q'(t)+\f{14}{3}p_2(t)^2+9p_2(t)p_3(t)+\f{9}{2}p_3(t)^2=0},\\
\\
\ds{p_2'(t)-p_2(t)q'(t)-6p_2(t)^2-9p_2(t)p_3(t)-\f{9}{2}p_3(t)^2=0,}\\
\\
\ds{p_3'(t)-2p_3(t)q'(t)+\f{8}{3}p_2(t)^2=0},
\ea
\ee
jointly with the constraint \eqref{6.0.4}.

\begin{remark}\label{rem6.1}
    The last two equations in \eqref{6.0.5} imply that if $p_2p_3\equiv0$, then $p_2=p_3=0$, that jointly with \eqref{6.0.4} yield $u\equiv0$. Therefore, we assume $p_2p_3\not\equiv0$.
\end{remark}

We want a $C^1$ solution and a careful analysis on \eqref{6.0.3} tells us that it must be continuous along the curve $t\mapsto(q(t),t)$, so that we require
\bb\label{6.0.6}
-p_1(t)=p_2(t)+2p_3(t),
\ee
that, in conjunction with \eqref{6.0.4}, yield
\bb\label{6.0.7}
p_1(t)=-\f{p_3(t)}{2}\quad\text{and}\quad p_2(t)=-\f{3}{2}p_3(t).
\ee

Let us now consider the quantity \eqref{3.0.3} and assume that ${\cal H}_0(t)=-c/2$, where $c$ is a constant. By \eqref{3.0.3}, we have
\bb\label{6.0.8}
\ba{lcl}
{\cal H}_0(t)&=&\ds{\int_{q(t)}^\infty p_1(t)e^{q(t)-x}dx+\int_{-\infty}^{q(t)}\Big(p_2(t)e^{x-q(t)}+p_3(t)e^{2(x-q(t))}\Big)dx}\\
\\
&=&\ds{p_1(t)+p_2(t)+\f{p_3(t)}{2},}
\ea
\ee
and taking into account \eqref{6.0.7} and the fact that ${\cal H}_0(t)=-c/2$, we conclude that
\bb\label{6.0.9}
p_1=-\f{c}{6},\,\,\,p_2=-\f{c}{2},\,\,\,p_3=\f{c}{3}.
\ee

Calculating \eqref{3.0.4}, we get
\bb\label{6.0.10}
\ba{lcl}
{\cal H}_1(t)&=&\ds{\int_{q(t)}^\infty p_1(t)^2e^{2(q(t)-x)}dx+\int_{-\infty}^{q(t)}\Big(p_2(t)e^{x-q(t)}+p_3(t)e^{2(x-q(t))}\Big)^2dx}\\
\\
&=&\ds{\f{p_1(t)^2}{2}+\f{p_2(t)^2}{2}+\f{2}{3}p_2(t)p_3(t)+\f{p_3(t)^2}{4},}
\ea
\ee
that, jointly with \eqref{6.0.9}, implies
$$
{\cal H}_1(t)=\f{c^2}{18},
$$
meaning that it is also a conserved quantity.

Summarising, substituting \eqref{6.0.9} into \eqref{6.0.1} and taking all the comments above into account, we just proved the following.
\begin{theorem}\label{teo6.1}
    The only $C^1$ solution of \eqref{5.0.9} of the form \eqref{6.0.1} leaving \eqref{3.0.4} invariant is given by the pseudo-peakon 
    \bb\label{6.0.11}
    u(x,t)=-\f{c}{6}e^{-x+ct+q_0}H(x-ct-q_0)+\Big(\f{c}{3}e^{2(x-ct-q_0)}-\f{c}{2}e^{x-ct-q_0}\Big)H(ct+q_0-x),
    \ee
    $q_0\in\R$ and $c\neq0$.
    In particular, these two facts imply that \eqref{3.0.3} is invariant.
\end{theorem}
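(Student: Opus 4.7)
The plan is to extract successive constraints on $p_1, p_2, p_3, q$ from (i) the continuity and $C^1$ regularity of $u$, (ii) the nonlocal equation \eqref{5.0.9}, and then to verify that the resulting profile indeed leaves $\mathcal{H}_1$ (and $\mathcal{H}_0$) invariant.

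First I would compute the distributional derivatives $u_t$ and $u_x$ of the ansatz \eqref{6.0.1}, obtaining the expressions \eqref{6.0.2} and \eqref{6.0.3}. The derivative $u_x$ carries a Dirac mass of coefficient $p_1-p_2-p_3$ across the curve $x=q(t)$; since the nonlocal equation \eqref{5.0.9} contains the square $u_x^2$ and products of Dirac masses are ill-defined, this coefficient must vanish, which yields \eqref{6.0.4} and, equivalently, continuity of $u$ along $x=q(t)$. Demanding that the remaining jump of $u_x$ at $x=q(t)$ also vanish gives the $C^1$ condition \eqref{6.0.6}, and the two constraints together force the ratios \eqref{6.0.7}, namely $p_1=-p_3/2$ and $p_2=-3p_3/2$.

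Next I would substitute \eqref{6.0.2}--\eqref{6.0.3}, now free of Dirac masses, into \eqref{5.0.9}. This requires evaluating the convolutions $\Lambda^{-2}(6u^2+2u_x^2)$ and $\Lambda^{-2}(u_x^2)$ against the kernel \eqref{5.0.10}; each decomposes naturally into contributions over $y>q(t)$ and $y<q(t)$, producing linear combinations of $e^{\pm(x-q(t))}$ and $e^{\pm 2(x-q(t))}$ on each half-plane. Matching coefficients of these four independent exponential profiles yields the ODE system \eqref{6.0.5}. I expect this matching step to be the main technical obstacle, because the convolutions mix exponentials from both sides of the front and careful bookkeeping is needed to group like terms correctly without double counting.

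Once \eqref{6.0.5} is in hand, I would plug in \eqref{6.0.7}. Its second and third equations then each yield an expression for $p_3'$ in terms of $p_3$ and $q'$; equating them and using Remark \ref{rem6.1} (so that $p_3\not\equiv 0$) forces $q'=3p_3$, which in turn gives $p_3'=0$. A short check shows that the first equation of \eqref{6.0.5} is then automatically satisfied. Hence $p_3$ is constant; setting $c:=3p_3$ yields $p_1=-c/6$, $p_2=-c/2$, $p_3=c/3$ and $q(t)=ct+q_0$, which substituted back into \eqref{6.0.1} is precisely the pseudo-peakon \eqref{6.0.11}. Finally, since all the $p_i$ are constants, the explicit formula \eqref{6.0.10} evaluates to $\mathcal{H}_1(t)\equiv c^2/18$, confirming its invariance, while the analogous computation \eqref{6.0.8} gives $\mathcal{H}_0(t)\equiv -c/2$, yielding the ``in particular'' claim of the theorem.
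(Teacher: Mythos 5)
Your proposal is correct and follows essentially the same route as the paper: remove the Dirac mass to get \eqref{6.0.4}, impose continuity of $u_x$ to get \eqref{6.0.6}--\eqref{6.0.7}, reduce to the system \eqref{6.0.5}, deduce that the $p_i$ are constants with $q'=3p_3=c$, and then verify the conserved quantities from \eqref{6.0.8} and \eqref{6.0.10}. The only (immaterial) difference is that you extract $p_3'=0$ by equating the two expressions for $p_3'$ coming from the second and third equations of \eqref{6.0.5}, whereas the paper reaches the same conclusion via the conserved combination $p_2+\tfrac{3}{4}p_3$ of Remark \ref{rem6.2} together with the normalisation ${\cal H}_0=-c/2$.
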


Theorem \ref{teo6.1} tells us that the continuity of the first derivatives of \eqref{6.0.1} implies the invariance of \eqref{3.0.4}. We might naturally wonder what happens if we no longer require a solution to be continuously differentiable, but still requiring continuity. 

Due to the exponential decay of \eqref{6.0.8} we infer that $u(\cdot,t)\in L^1(\R)$, while \eqref{6.0.10} says that $u(\cdot,t)\in L^\infty(\R)$. Similarly, from \eqref{6.0.3} we conclude that $u_x\in L^1(\R)\cap L^\infty(\R)$. Altogether, they tell us that both $u(\cdot,t)$ and $u_x(\cdot,t)$ belong to $L^r(\R)$, $1\leq r\leq \infty$, within the interval of existence of the solution. 

For this reason, we shall seek  what sort of solutions like \eqref{6.0.1} we may have whenever ${\cal H}_1'(t)\neq0$.

Without loss of generality, we can assume that ${\cal H}_0(t)=H_0$, where $H_0$ is a constant. From \eqref{6.0.8} and {\eqref{6.0.10} we obtain
\bb\label{6.0.12}
\ba{l}
\ds{p_1(t)+p_2(t)+\f{p_3(t)}{2}=H_0,}\\
\\
\ds{{\cal H}_1(t)=\f{p_1(t)^2}{2}+\f{p_2(t)^2}{2}+\f{2}{3}p_2(t)p_3(t)+\f{p_3(t)^2}{4}.}
\ea
\ee

Taking \eqref{6.0.4} into account we can eliminate $p_1(t)$ from the two equations above, and get
\bb\label{6.0.13}
p_2(t)=\frac{H_0}{2}-\f{3}{4}p_3(t)
\ee
and
\bb\label{6.0.14}
p_2(t)^2+\f{5}{3}p_2(t)p_3(t)+\f{3}{4}p_3(t)^2={\cal H}_1(t).
\ee

\begin{remark}\label{rem6.2}
    Adding the first two equations in \eqref{6.0.5} and then summing the result equation to the third multiplied by $1/2$, we obtain an equation equivalent to \eqref{6.0.13}, showing that \eqref{3.0.3} is conserved for any solution of \eqref{5.0.9} of the type \eqref{6.0.1}. 
\end{remark}

Substituting \eqref{6.0.13} in \eqref{6.0.14} we obtain
$$
p_3(t)^2+\f{4}{3}H_0p_3(t)=4(4{\cal H}_1(t)-H_0^2),
$$
whence from we conclude that $p_3(t)$ cannot be a constant as long as ${\cal H}_1'(t)\neq0$, and neither can $p_2(t)$ in view of \eqref{6.0.13}. Altogether, the conditions on $p_2(t)$ and $p_3(t)$, jointly with \eqref{6.0.5} and the condition $\mathcal{H}_0=-c/2$, provide the solution
\bb\label{6.0.15}
p_1(t)=\f{1}{6}\f{1}{t-t_0}-\f{c}{6},\,\,p_2(t)=-\f{1}{2}\f{1}{t-t_0}-\f{c}{2},\,\,p_3(t)=\f{2}{3}\f{1}{t-t_0}+\f{c}{3},\,\,q(t)=ct+q_0,
\ee
that, substituted into \eqref{6.0.1}, yield
\bb\label{6.0.16}
\ba{lcl}
u(x,t)&=&\ds{\f{1}{t-t_0}\Big(\f{e^{q_0+ct-x}}{6}H(x-ct-q_0)+\Big(\f{2e^{2(x-ct-q_0)}}{3}-\f{e^{x-ct-q_0}}{2}\Big)H(q_0+ct-x)}\Big)\\
\\
&&\ds{-\f{c}{6}e^{q_0+ct-x}H(x-ct-q_0)+\Big(\f{c}{3}e^{2(x-ct-q_0)}-\f{c}{2}e^{x-ct-q_0}\Big)H(q_0+ct-x)}.
\ea
\ee

System \eqref{6.0.5} does not have any other non-constant and non-conservative solution (in the sense that ${\cal H}_1'(t)\neq0$), meaning that $u$ given by \eqref{6.0.16} is also unique up to the choice of the parameters $q_0$, $t_0$ and $c$.

If $t_0>0$, then \eqref{6.0.16} blows up in finite time when $t$ approaches $t_0$, and its maximal interval of existence is $[0,t_0)$.  In case $t_0<0$, then it is defined for any positive time. 

We observe that $u$ is piecewise $C^1(\R\times[0,t_0))$. In fact, from \eqref{6.0.3} we infer that for $x>q(t)$ we have
$$
\lim_{x\rightarrow q(t)}u_x(x,t)=\f{c}{6}-\f{1}{6}\f{1}{t-t_0},
$$
whereas for $x<q(t)$, we have
$$
\lim_{x\rightarrow q(t)}u_x(x,t)=\f{c}{6}+\f{5}{6}\f{1}{t-t_0}.
$$

As a result, for each $t$ for which $u$ is defined, we have
$$
\lim_{n\rightarrow\infty}\Big(u_x\Big(q(t)+\f{1}{n},t\Big)-u_x\Big(q(t)-\f{1}{n},t\Big)\Big)=-\f{1}{t-t_0}.
$$

For $t_0>0$, the relation above also shows that for any $\epsilon>0$, the quantity $u_x(q(t)+\epsilon,t)-u_x(q(t)-\epsilon,t)$ is unbounded as $t$ approaches $t_0$. 

The blow up shown above is not surprising since we can easily infer that $u$ given by \eqref{6.0.16} is unbounded when $t\rightarrow t_0$ for $t_0>0$ and then, 
\bb\label{6.0.17}
\lim_{t\rightarrow t_0}\|u(\cdot,t)\|_\infty=\infty,
\ee
where $\|\cdot\|_\infty$ denotes the usual norm in $L^\infty$.
\begin{theorem}\label{teo6.2}
    Up to a choice of the constants $c$, $t_0$ and $q_0$, any weak solution for \eqref{5.0.9} of the form \eqref{6.0.1} satisfying the condition ${\cal H}_1'(t)\neq0$ and $u(x,0)\in L^{\infty}(\R)$ is piecewise $C^1$ as long as it exists. If $t_0<0$ it is a global solution, whereas for $t_0>0$ its maximal interval of existence is $[0,t_0)$ and \eqref{6.0.17} holds.
\end{theorem}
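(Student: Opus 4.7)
My plan is to reduce the problem to the ODE system (6.0.5) for the coefficients $p_1(t),p_2(t),p_3(t),q(t)$ already derived for the ansatz (6.0.1), and then solve this system explicitly under the extra constraints imposed by the hypotheses of the theorem.

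First I would recall that inserting (6.0.1) into the nonlocal equation (5.0.9), together with the distributional expressions (6.0.2)–(6.0.3), forces the jump of $u$ across $x=q(t)$ to vanish; otherwise the product $u_x^2$ contains an ill-defined $\delta^2$ term. This produces the continuity condition (6.0.4), namely $p_1=p_2+p_3$, and matching the remaining coefficients of $H(x-q)$, $H(q-x)$ and $\delta(x-q)$ yields the system (6.0.5). This step is algebraic but delicate, because it is the one that ties the weak-solution hypothesis to the dynamical system.

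Next I would use Remark 6.2, which shows $\mathcal{H}_0$ is automatically conserved for any solution of the form (6.0.1), together with the integrability $u(x,0)\in L^\infty(\mathbb{R})$, to write $\mathcal{H}_0=-c/2$ for a uniquely determined constant $c$. Combined with (6.0.10) this produces the algebraic relations (6.0.13)–(6.0.14). The assumption $\mathcal{H}_1'(t)\neq 0$ then rules out $p_3$ constant, because (6.0.14) would otherwise force $\mathcal{H}_1$ to be constant as well; this excludes the branch already handled by Theorem 6.1. After using (6.0.13) to eliminate $p_2$, the third equation of (6.0.5) becomes an ODE of Riccati type for $p_3$, whose general solution has the form $p_3(t)=\tfrac{2}{3(t-t_0)}+\tfrac{c}{3}$. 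Substituting back yields (6.0.15), and the first equation of (6.0.5) gives $q'(t)=c$, hence $q(t)=ct+q_0$. Plugging (6.0.15) into (6.0.1) reproduces (6.0.16), establishing the claimed uniqueness up to the constants $c,t_0,q_0$.

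For the regularity and blow-up claims, (6.0.16) is manifestly smooth on each side of the characteristic $x=q(t)$; the continuity of $u$ along that curve is preserved by (6.0.4), and the distributional derivative (6.0.3) shows that the one-sided limits of $u_x$ at $x=q(t)$ are finite whenever $t\neq t_0$, with jump equal to $-1/(t-t_0)$. This gives the piecewise $C^1$ regularity. If $t_0<0$, all coefficients in (6.0.15) are smooth on $[0,\infty)$, so the solution is global. If $t_0>0$, each $p_i(t)$ diverges as $t\to t_0^-$, and since $\|u(\cdot,t)\|_\infty\geq |p_2(t)+p_3(t)|=|p_1(t)|\sim 1/|t-t_0|$, we obtain (6.0.17) and the maximal interval $[0,t_0)$.

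The main obstacle I expect is uniqueness: one must verify carefully that (6.0.5) combined with (6.0.13)–(6.0.14) and $\mathcal{H}_1'(t)\neq 0$ leaves no spurious solution branches beyond (6.0.15). A secondary subtlety is handling the distributions consistently — in particular, ensuring that the $\delta$-function terms in (6.0.2) cancel after imposing (6.0.4), so that the weak formulation (5.0.12) is actually satisfied and not merely the pointwise PDE on each half-plane.
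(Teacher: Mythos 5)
Your proposal follows essentially the same route as the paper: reduce to the system \eqref{6.0.5} via the continuity constraint \eqref{6.0.4}, use the conservation of ${\cal H}_0$ together with ${\cal H}_1'(t)\neq0$ to force $p_3$ non-constant, solve explicitly to get \eqref{6.0.15}--\eqref{6.0.16}, and read off the piecewise $C^1$ regularity and the blow-up \eqref{6.0.17} from the explicit formulas — in fact you supply more detail on the ODE step than the paper does. One small ordering point: after eliminating $p_2$ via \eqref{6.0.13}, the third equation of \eqref{6.0.5} still contains $q'$ and so is not yet a closed Riccati equation for $p_3$; you should first combine two of the equations to obtain $(p_3-c/3)(q'-c)=0$, conclude $q'=c$ from the non-constancy of $p_3$, and only then solve the resulting Riccati equation.
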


\section{Two pseudo-peakon solutions}\label{sec7}

Let us now assume a solution of the form
\bb\label{7.0.1}
u(x,t)=p_1(t)u_1(x,t)+p_2(t)u_2(x,t),
\ee
where $p_1$ and $p_2$ are $C^1$ functions and
\bb\label{7.0.2}
u_i(x,t)=-\dfrac{1}{6}e^{q_i(t)-x}H(x-q_i(t))-\dfrac{1}{2}e^{x-q_i(t)}H(q_i(t)-x)+\dfrac{1}{3}e^{2(x-q_i(t))}H(q_i(t)-x),  
\ee
$i=1,\,2$. Again, $H(\cdot)$ denotes the usual Heaviside step function and $q_1$ and $q_2$ are assumed to be $C^1$ functions satisfying the conditions
\bb\label{7.0.3}
p_1p_2\not\equiv0\quad\text{and}\quad q_1(t)\leq q_2(t).
\ee

Substituting \eqref{7.0.2} into \eqref{5.0.12} we will obtain an equation having several products involving the Heaviside step functions. We take into account \eqref{7.0.2} to derive the following relations:
\begin{enumerate}
	\item The set of functions $\left\lbrace H(q_1(t)-x),\,H(x-q_1(t))H(q_2(t)-x),\,H(x-q_2(t))\right\rbrace$ is linearly independent;
	
	\item $H(x-q_1(t))=H(x-q_1(t))H(q_2(t)-x)+H(x-q_2(t))$;
	
	\item $H(q_2(t)-x)=H(x-q_1(t))H(q_2(t)-x)+H(q_1(t)-x)$;
	
	\item $H(x-q_1(t))H(x-q_2(t))=H(x-q_2(t))$;
	
	\item $H(q_1(t)-x)H(q_2(t)-x)=H(q_1(t)-x)$;
	
	\item $H(q_1(t)-x)H(x-q_2(t))=0$.
\end{enumerate}	

Bringing this information to the equation obtained from \eqref{5.0.12} after the substitution of \eqref{7.0.1}, we obtain
\begin{align*}
&\left[\left(\dfrac{1}{3}p_1'-\dfrac{2}{3}p_1q_1'+\dfrac{2}{3}p_1^2\right)e^{-2q_1}+\left(\dfrac{1}{3}p_2'-\dfrac{2}{3}p_2q_2'+\dfrac{2}{3}p_2^2\right)e^{-2q_2}+\dfrac{4}{3}p_1p_2e^{-q_1-q_2}\right]e^{2x}H(q_1-x)+\\
&\left[\left(-\dfrac{1}{2}p_1'+\dfrac{1}{2}p_1q_1'-\dfrac{1}{2}p_1^2\right)e^{-q_1}+\left(-\dfrac{1}{2}p_2'+\dfrac{1}{2}p_2q_2'-\dfrac{1}{2}p_2^2-\dfrac{3}{2}p_1p_2\right)e^{-q_2}+\dfrac{1}{2}p_1p_2e^{q_1-2q_2}\right]e^{x}H(q_1-x)+\\
&\left[\left(-\dfrac{1}{6}p_1'-\dfrac{1}{6}p_1q_1'+\dfrac{1}{6}p_1^2+\dfrac{1}{2}p_1p_2\right)e^{q_1}+\left(-\dfrac{1}{6}p_2'-\dfrac{1}{6}p_2q_2'+\dfrac{1}{6}p_2^2\right)e^{q_2}-\dfrac{1}{6}p_1p_2e^{2q_1-q_2}\right]e^{-x}H(x-q_2)+\\
&\left[\left(-\dfrac{1}{6}p_1'-\dfrac{1}{6}p_1q_1'+\dfrac{1}{6}p_1^2\right)e^{-x+q_1}+\left(\dfrac{1}{3}p_2'-\dfrac{2}{3}p_2q_2'+\dfrac{2}{3}p_2^2\right)e^{2x-2q_2}\right]H(x-q_1)H(q_2-x)+\\
&\left[\left(-\dfrac{1}{2}p_2'+\dfrac{1}{2}p_2q_2'-\dfrac{1}{2}p_2^2\right)e^{x-q_2}+\dfrac{1}{2}p_1p_2e^{x+q_1-2q_2}-\dfrac{1}{6}p_1p_2e^{-x+2q_1-q_2}\right]H(x-q_1)H(q_2-x)=0.
\end{align*}

Equating each term multiplied by a Heaviside step function to $0$ and rearranging terms, we obtain the following dynamical system:

\begin{equation}\label{7.0.4}
	\left\lbrace\begin{aligned}
	p_1'+p_1q_1'-p_1^2+p_1p_2e^{q_1-q_2}=0,\\
	p_2'-p_2q_2'+p_2^2-p_1p_2e^{q_1-q_2}=0,\\
	p_2'-2p_2q_2'+2p_2^2=0,\\
	p_1'+2p_1q_1'-2p_1^2=0,\\
	p_1'+2p_1p_2e^{q_1-q_2}=0,\\
    p_2'-2p_1p_2e^{q_1-q_2}=0.
	\end{aligned}\right.
\end{equation}

The last two equations in \eqref{7.0.4} tell us that
\bb\label{7.0.5}
p_1+p_2=k_0,
\ee
for some constant $k_0$. Whenever $k_0=0$, we conclude from \eqref{7.0.5} that $p_1=-p_2$ and we can simplify \eqref{7.0.4} 
\begin{equation}\label{7.0.6}
\left\lbrace\begin{aligned}
p_2'+p_2q_1'+p_2^2+p_2^2e^{q_1-q_2}=0,\\
p_2(q_1'+q_2')=0,\\
p_2'-2p_2q_2'+2p_2^2=0,\\
p_2'+2p_2^2e^{q_1-q_2}=0.
\end{aligned}\right.
\end{equation}
The second equation in \eqref{7.0.6} says that $p_2=0$ or $q_1+q_2=k_1$, for another constant $k_1$. Upon substitution of $q_1=k_1-q_2$, the resulting system to be solved is
\begin{equation}\label{7.0.7}
    \left\lbrace\begin{aligned}
-p_2'+p_2q_2'-p_2^2-p_2^2e^{k_1-2q_2}=0,\\
p_2'+2p_2^2e^{k_1-2q_2}=0.
    \end{aligned}\right.
\end{equation}

Solving \eqref{7.0.7} we conclude
$$p_2(t)=\dfrac{c_1}{e^{k_1+2c_1t+2c_2}-1}\quad\text{and}\quad q_2(t)=-c_2-c_1t,$$ 
where $c_1,c_2,k_1$ are constants. Therefore, 
$$p_1(t)=-\dfrac{c_1}{e^{k_1+2c_1t+2c_2}-1}\quad\text{and}\quad q_1(t)=k_1+c_2+c_1t.$$ 

Substituting these functions into \eqref{7.0.1}, we obtain the solution

\bb\label{7.0.8}
\ba{lcl}
u(x,t)&=&\ds{-\dfrac{c_1}{e^{k_1+2c_1t+2c_2}-1}\Big(-\dfrac{1}{6}e^{k_1+c_2+c_1t-x}H(x-k_1-c_2-c_1t)}\\
\\
&&\ds{-\dfrac{1}{2}e^{x-k_1-c_2-c_1t}H(k_1+c_2+c_1t-x)+\dfrac{1}{3}e^{2(x-k_1-c_2-c_1t)}H(k_1+c_2+c_1t-x)\Big)}\\
\\
&&\ds{+\dfrac{c_1}{e^{k_1+2c_1t+2c_2}-1}\Big(-\dfrac{1}{6}e^{-c_2-c_1t-x}H(x+c_2+c_1t)}\\
\\
&&\ds{-\dfrac{1}{2}e^{x+c_2+c_1t}H(-c_2-c_1t-x)+\dfrac{1}{3}e^{2(x+c_2+c_1t)}H(-c_2-c_1t-x)\Big).}
\ea
\ee
\begin{figure}[h!]
	\centering
	\begin{subfigure}{0.45\linewidth}
		\includegraphics[width=\linewidth]{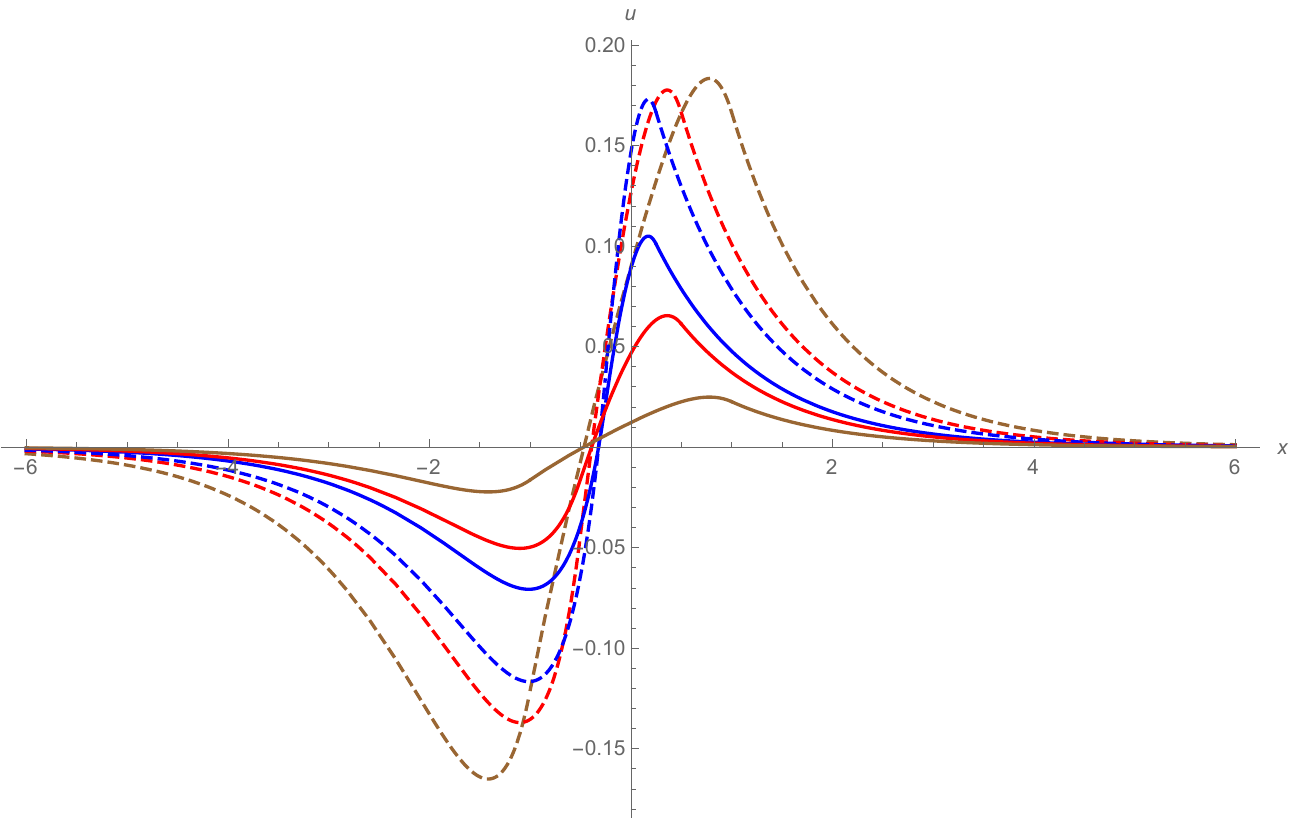}
		\caption{Solution \eqref{7.0.8} for different values of $t$: the brown, red and blue dashed lines represent $t=-1$, $t=-1/2$ and $t=-1/4$, respectively, whereas the continuous brown, red and blue lines show the cases $t=1/4,\,1/2$ and $t=1$.}
		\label{fig4a}
	\end{subfigure}\quad\quad
	\begin{subfigure}{0.45\linewidth}
	        \includegraphics[width=\linewidth]{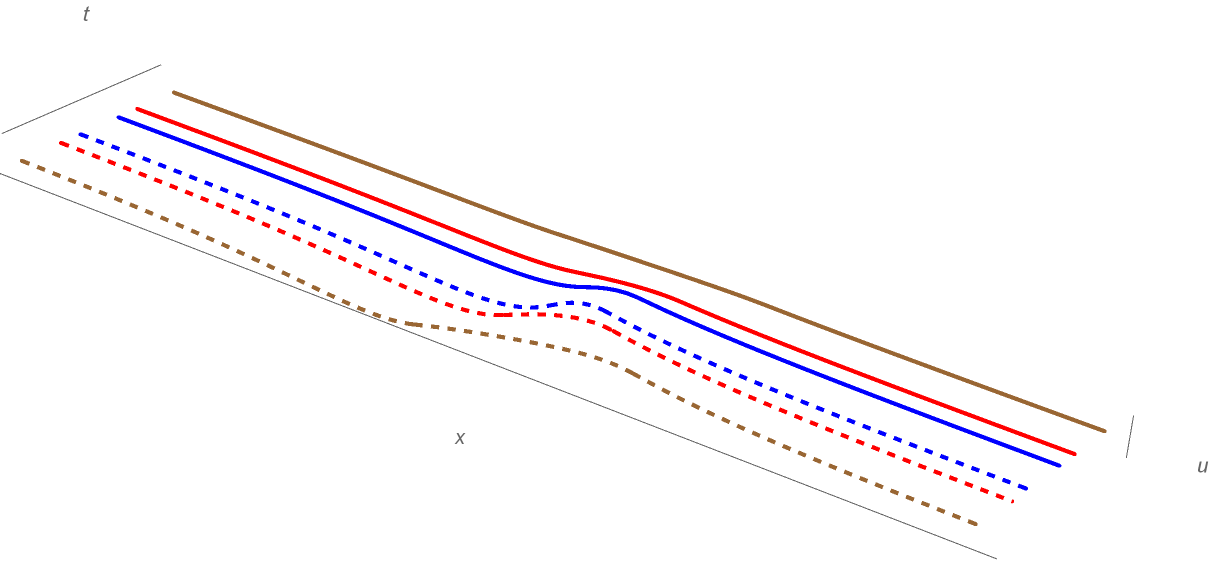}
	        \caption{Behavior of \eqref{7.0.8} for $t=\pm 1,\pm 1/2,\pm 1/4$. The colors represent the same solutions as in \ref{fig4a}.}
	        \label{fig4b}
         \end{subfigure}
	\caption{Figures \ref{fig4a} and \ref{fig4b} show \eqref{7.0.8} for $k_1=c_2=0$ and $c_1=1$. While \ref{fig4a} shows the behavior for fixed times, \ref{fig4b} show how they are evolving along time. Observe that a blow up occurs at $t=0$ and $\|u(\cdot,t)\|_{L^\infty(\R)}\leq(1-e^{2t})^{-1}$, implying that the solution vanishes as $t\rightarrow\infty$. This is also inferred from the figures, where the continuous lines represent the solutions for different values of $t>0$.}
	\label{fig6}
\end{figure}

\section{Solutions for the Degasperis-Procesi equation}\label{sec8}

Let us explore some connections between \eqref{1.0.1} and the DP equation \eqref{1.0.3}. With little effort we can rewrite \eqref{1.0.1} as
\bb\label{8.0.1}
u_t-u_{txx}=\p_x(2+\p_x)[(\p_x-2)u]^2.
\ee

Let us define a new field variable $v$ from a solution $u$ of \eqref{8.0.1} by
\bb\label{8.0.2}
v=2(\p_x-2)u.
\ee

At least formally, we have 
\bb\label{8.0.3}
u=\f{1}{2}(\p_x-2)^{-1}v,
\ee
that, substituted into \eqref{8.0.1}, yields
\bb\label{8.0.4}
v_t-v_{txx}=\f{1}{2}\p_x(\p_x^2-4)v^2,
\ee
that is nothing but the DP equation \eqref{1.0.3}. Note that solutions of the DP equation are less regular than those of \eqref{8.0.1}.

Let us give meaning to \eqref{8.0.3}. To this end, we first find the Green function $g$ of the operator $L:=2(\p_x-2)$, 
that is,
\bb\label{8.0.5}
2(\p_x-2)g=\delta(x).
\ee

Applying the Fourier transform to \eqref{8.0.5} we find
$$
\hat{g}(k)=-\f{1}{\sqrt{2\pi}}\f{1}{2(2+ik)},
$$
whereas making use of the inverse Fourier transform to the expression above, we obtain
\bb\label{8.0.6}
g(x)=-\f{1}{2}e^{2x}H(-x).
\ee

We observe that $g\in L^1(\R)\cap L^\infty(\R)$, meaning that $T_g(\phi)(x):=(g\ast\phi)(x)$ is well defined for $\phi\in H^s(\R)$. Then, we have
$$
{\cal F}\Big(T_g\phi\Big)(k)=-\f{1}{\sqrt{2\pi}}\f{1}{2(2+ik)}\hat{\phi}(k).
$$
As a consequence of the equality above, we have
$$
(1+k^2)^{s+1}\Big|{\cal F}(T_g(\phi))(k)\Big|^2=\f{1}{8\pi}\f{1+k^2}{4+k^2}(1+k^2)^s|\hat{\phi}(k)|^2
$$
and thus, $\|T_g(\phi)\|_{H^{s+1}}\leq \|\phi\|_{H^s}$,
showing the continuity of $T_g$. 

Let $\phi\in H^{s+1}(\R)$, $s>1/2$. Then $L\phi\in H^s(\R)$ and

\bb\label{8.0.7}
\ba{lcl}
T_g(L\phi)(x)&=&\ds{(g\ast(L\phi))(x)=\int_\R g(x-y)\Big(2(\p_y-2)\phi(y)\Big)dy}\\
\\
&=&\ds{2\int_\R g(x-y)\phi'(y)dy-4\int_\R g(x-y)\phi(y)dy.}
\ea
\ee

Let $\epsilon>0$ and define
$$
I(x):=2\int_\R g(x-y)\phi'(y)dy=-I_1^\epsilon-I_2^\epsilon-I_3^\epsilon,
$$
where
$$
\ba{lcl}
I_1^\epsilon&=&\ds{\int_{-\infty}^{x-\epsilon}e^{2(x-y)}H(y-x)\phi'(y)dy}
\\
\\
I_2^\epsilon&=&\ds{\int_{x-\epsilon}^{x+\epsilon}e^{2(x-y)}H(y-x)\phi'(y)dy}
\\
\\
I_3^\epsilon&=&\ds{\int_{x+\epsilon}^\infty e^{2(x-y)}H(y-x)\phi'(y)dy}
\ea
$$

Since $H(y-x)$ vanishes for $y<x$, we conclude that $I_1^\epsilon=0$, whereas the fact that $g\in L^\infty(\R)$ and $\phi'\in H^{s}(\R)$, $s>1/2$, imply $\phi'\in L^\infty(\R)$ as well. Therefore, we have
$$|I_2^\epsilon|\leq \int_{x-\epsilon}^{x+\epsilon}|g(y-x)||\phi'(y)|dy\leq 2\|g\|_{L^\infty(\R)}\|\phi\|_{L^\infty(\R)}\epsilon\xlongrightarrow{\epsilon\rightarrow 0}0.$$

In regard to $I_3^\epsilon$, integration by parts reads
$$
I^\epsilon_3=\int_{x+\epsilon}^\infty e^{2(x-y)}\phi'(y)dy=-\phi(x+\epsilon)+2\int_{x+\epsilon}^\infty e^{2(x-y)}\phi(y)dy\xlongrightarrow{\epsilon\rightarrow 0}-\phi(x)+2\int_{x}^\infty e^{2(x-y)}\phi(y)dy.
$$

Therefore, we conclude that 
$$I(x)=\phi(x)-2\int_{x}^\infty e^{2(x-y)}\phi(y)dy.$$

Substituting $I(x)$ into \eqref{8.0.7}, we get
$$
\ba{lcl}
(T_g(L\phi))(x)&=&\ds{\phi(x)-2\int_{x}^\infty e^{2(x-y)}\phi(y)dy-4\int_\R g(x-y)\phi(y)dy}\\
\\
&=&\ds{\phi(x)-2\int_{x}^\infty e^{2(x-y)}\phi(y)dy+2\int_{x}^\infty e^{2(x-y)}\phi(y)dy}=\phi(x),
\ea
$$
meaning that $T_g:H^s(\R)\rightarrow H^{s+1}(\R)$ is surjetive. In particular, since $\phi(x)=(\delta\ast\phi)(x)$ and $(Lg)(x)=\delta(x)$, we can simply write 
$$T_g(L(\phi))(x)=(g\ast(L\phi))(x)=((Lg)\ast \phi)(x)=(\delta\ast\phi)(x)=\phi(x).$$

On the other hand, if $T_g(\phi)=0$, then we have
$$
0=L(T_g\phi)=L(g\ast\phi)(x)=((Lg)\ast\phi)(x)=(\delta\ast\phi)(x)=\phi(x),
$$
and we are forced to conclude that $T_g$ is injective. Moreover, we easily conclude that $L:H^{s+1}(\R)\rightarrow H^{s}(\R)$ is bounded for $s>1/2$.

Altogether, these notes prove the following.
\begin{theorem}\label{teo8.1}
     Let $s>1/2$. Then $T_g:H^s(\R)\rightarrow H^{s+1}(\R)$ and $L:H^{s+1}(\R)\rightarrow H^s(\R)$ are bounded operators and inverses one from another. 
\end{theorem}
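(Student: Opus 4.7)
The plan is to establish both boundedness and invertibility using the Fourier analysis already initiated in the discussion preceding the statement. First I would verify boundedness of $L = 2(\p_x - 2)$ directly: for $\phi \in H^{s+1}(\R)$ one has $\|L\phi\|_{H^s} \leq 2\|\phi_x\|_{H^s} + 4\|\phi\|_{H^s} \leq C\|\phi\|_{H^{s+1}}$. Boundedness of $T_g$ follows from the explicit symbol: using $\widehat{T_g\phi}(k) = -(2\pi)^{-1/2}\hat\phi(k)/[2(2+ik)]$, the identity $(1+k^2)^{s+1}|\widehat{T_g\phi}(k)|^2 = (8\pi)^{-1}\f{1+k^2}{4+k^2}(1+k^2)^s|\hat\phi(k)|^2$ combined with the elementary bound $(1+k^2)/(4+k^2) \leq 1$ yields $\|T_g\phi\|_{H^{s+1}} \leq C\|\phi\|_{H^s}$.

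The core of the proof is the identity $T_g \circ L = \mathrm{id}_{H^{s+1}}$, which I would establish by direct pointwise calculation. For $\phi \in H^{s+1}(\R)$ with $s > 1/2$, Sobolev embedding gives $\phi, \phi' \in C^0(\R) \cap L^\infty(\R)$, so $(g \ast L\phi)(x)$ can be evaluated pointwise. Writing $(g \ast L\phi)(x) = 2\int_\R g(x-y)\phi'(y)\,dy - 4\int_\R g(x-y)\phi(y)\,dy$ with $g(x) = -\f{1}{2}e^{2x}H(-x)$, I would handle the derivative term by an $\epsilon$-regularised splitting around $y = x$ where $H(y-x)$ jumps: the contribution from $y < x$ vanishes, the piece over $[x-\epsilon, x+\epsilon]$ is $O(\epsilon)$ by $L^\infty$-bounds, and the tail $\int_{x+\epsilon}^\infty e^{2(x-y)}\phi'(y)\,dy$ integrates by parts to $-\phi(x+\epsilon) + 2\int_{x+\epsilon}^\infty e^{2(x-y)}\phi(y)\,dy$. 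Letting $\epsilon \to 0$ produces exactly the term needed to cancel the second integral in the original splitting, leaving $\phi(x)$.

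Having $T_g \circ L = \mathrm{id}$ on $H^{s+1}$ immediately yields surjectivity of $T_g: H^s \to H^{s+1}$. For injectivity I would argue that $T_g \phi = 0$ implies $0 = L(T_g\phi) = (Lg) \ast \phi = \delta \ast \phi = \phi$, so $T_g$ is a bijection. Combining this with $T_g \circ L = \mathrm{id}$ gives $L = T_g^{-1}$, hence $L \circ T_g = \mathrm{id}_{H^s}$ as well.

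The main obstacle will be justifying the distributional identity $L(g \ast \phi) = (Lg) \ast \phi$ used in the injectivity step, since $g$ has a jump and the derivative in $L$ must be interpreted distributionally. I would handle this on the Fourier side, where both sides reduce to an algebraic identity between symbols, thereby sidestepping a delicate repetition of the $\epsilon$-splitting argument in the distributional setting.
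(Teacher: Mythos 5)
Your proposal is correct and follows essentially the same route as the paper: the Fourier-symbol computation for the boundedness of $T_g$, the direct estimate for $L$, the $\epsilon$-regularised splitting with integration by parts to establish $T_g\circ L=\mathrm{id}$ on $H^{s+1}$, and the identity $L(g\ast\phi)=(Lg)\ast\phi=\phi$ for injectivity. Your closing remark about justifying the distributional identity on the Fourier side is a sensible tightening of a step the paper states without proof, but it does not alter the argument.
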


The non-local form of \eqref{8.0.1} is given by \eqref{5.0.9}, whereas the non-local form of the DP equation is
\bb\label{8.0.8}
v_t+vv_x+\f{3}{2}\p_x\Lambda^{-2}v^2=0.
\ee

From \eqref{8.0.2} we see that strong solutions of \eqref{5.0.9} may not be transformed into strong solutions of the DP equation \eqref{8.0.8}. Our aim in the remaining of this section is to understand the sort of solutions for the DP equation we can obtain from those for \eqref{5.0.9} we obtained in the two previous sections.

Applying the operator $L$ into the pseudo-peakon \eqref{6.0.11} we obtain
\bb\label{8.0.9}
v(x,t)=ce^{x-ct}H(ct-x)+c^{ct-x}H(x-ct)=ce^{-|x-ct|}=:v_c(x,t),
\ee
that is nothing but the well known one-peakon solution for the DP equation.
\begin{figure}[h!]
	\centering
	\begin{subfigure}{0.4\linewidth}
		\includegraphics[width=\linewidth]{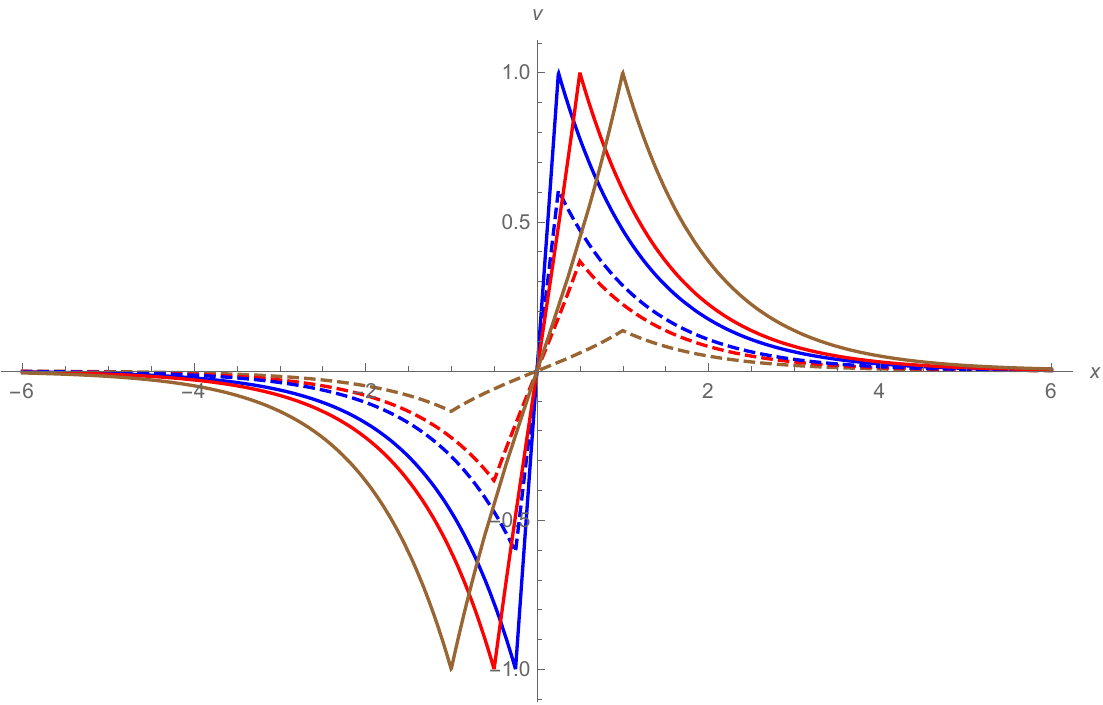}
		\caption{Solution \eqref{8.0.10} for different values of $t$: the brown, red and blue dashed lines represent $t=-1$, $t=-1/2$ and $t=-1/4$, respectively, whereas the continuous brown, red and blue lines show the cases $t=1/4,\,1/2$ and $t=1$.}
		\label{fig4a}
	\end{subfigure}\quad\quad
	\begin{subfigure}{0.4\linewidth}
	        \includegraphics[width=\linewidth]{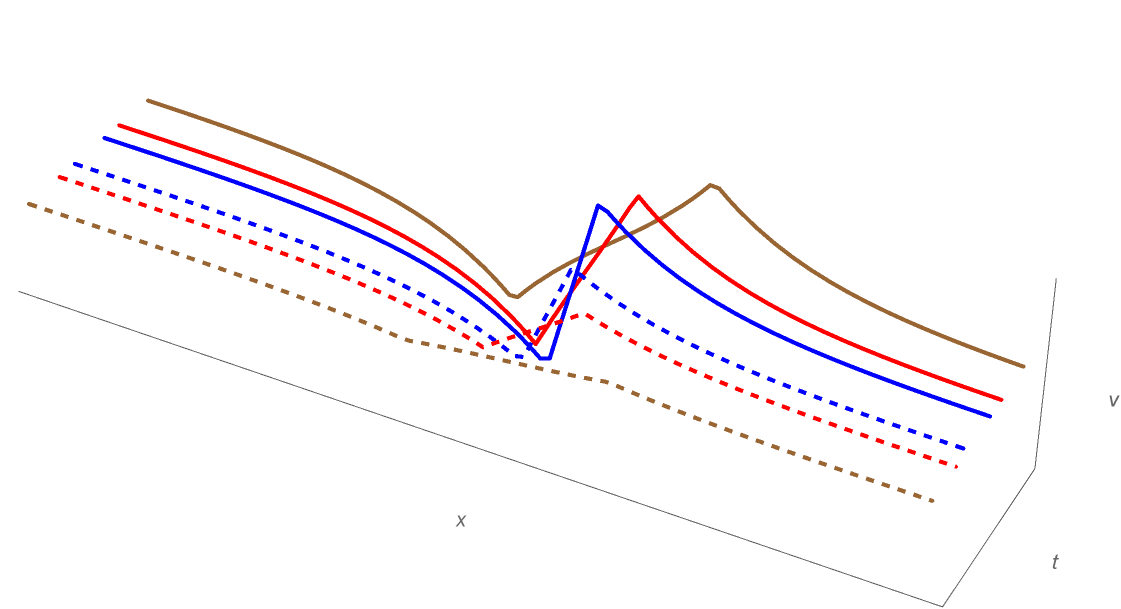}
	        \caption{Behavior of \eqref{8.0.10} for $t=\pm 1,\pm 1/2,\pm 1/4$. The colors represent the same solutions as in \ref{fig4a}.}
	        \label{fig4b}
         \end{subfigure}
	\caption{Figures \ref{fig4a} and \ref{fig4b} show \eqref{8.0.10} for $k_1=c_2=0$ and $c_1=1$. While \ref{fig4a} shows the behavior for fixed times, \ref{fig4b} show how they are evolving along time. Observe that a blow up occurs at $t=0$ and $\|u(\cdot,t)\|_{L^\infty(\R)}\leq(1-e^{2t})^{-1}$, implying that the solution vanishes as $t\rightarrow\infty$. This is also inferred from the figures, where the continuous lines represent the solutions for different values of $t>0$.}
	\label{fig4}
\end{figure}

The pseudo-peakon solution \eqref{7.0.8} provides the two-peakon solution for the DP equation
\bb\label{8.0.10}
v(x,t)=\frac{c_1}{1-e^{-2c_1t + k_1 + 2c_2}} (e^{-|c_2-k_1+c_1 t +x|}-e^{-|c_2+c_1t -x|}).
\ee

A more interesting solution is obtained by considering \eqref{6.0.16}. Although it does not satisfy the regularity conditions in Theorem \ref{teo8.1}, we can still consider $v=Lu$, where $u$ is given by \eqref{6.0.16}. Of course we cannot expect a continuous function, but the new solution of the DP equation is given by
\bb\label{8.0.11}
v(x,t)=ce^{-|x-ct|}-\f{1}{t-t_0}\sign{(x-ct)}e^{-|x-ct|},
\ee
that has to be understood in the sense of weak solutions.

If we define 
$$v_s(x,t)=-\f{1}{t-t_0}\sign{(x-ct)}e^{-|x-ct|},$$
then \eqref{8.0.11} can be written as $v(x,t)=v_c(x,t)+v_s(x,t)$, where $v_c$ is the peakon given in \eqref{8.0.9}, whereas $v_s$ correspond to a function having an evident sudden change of behavior near the discontinuity line $x=ct$. This sort of solutions for the DP equation was predicted by Coclite and Karlsen \cite{coc1,coc2} and explicitly discovered by Lundmark \cite{lund}.

Unlike peakons, that remain bounded for any $t$, solution \eqref{8.0.11} blows up in finite time for $t_0>0$, which is a feature rather different from that involving peakon solutions. As long as $t_0<0$, \eqref{8.0.11} exists for any $t>0$. As a result, for $t\gg1$, the solution approaches, or is dominated by, the peakon solution for the DP equation, that is, $v(x,t)\approx v_c(x,t)$, $t\gg1$.

\begin{figure}[h!]
	\centering
	\begin{subfigure}{0.35\linewidth}
		\includegraphics[width=\linewidth]{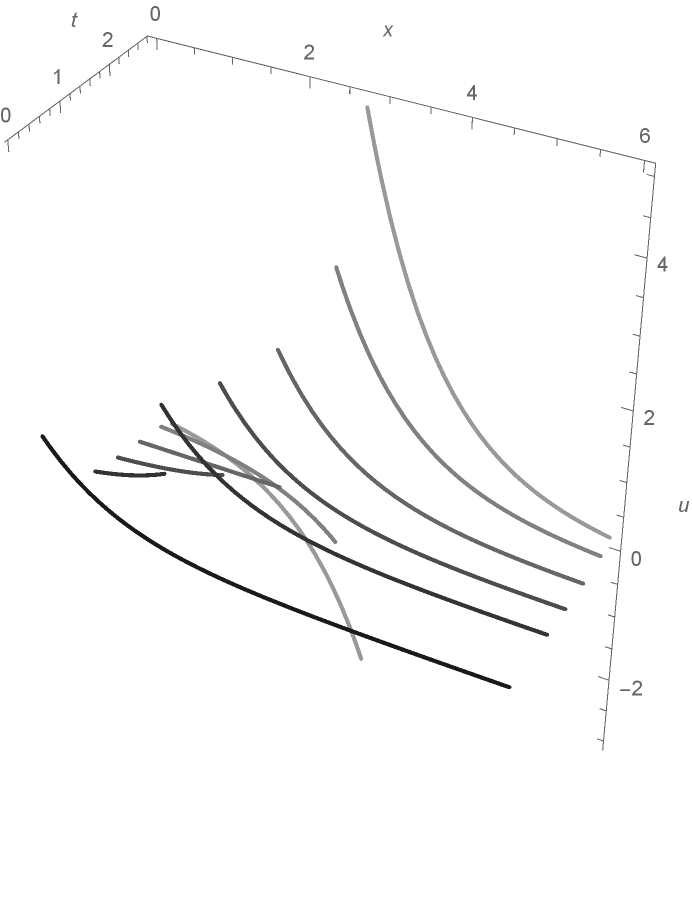}
		\caption{Solution \eqref{8.0.11} with $c=1$ and $t_0=3$. Here $t$ varies within $[0,2.75]$}
		\label{subt=3}
	\end{subfigure}\quad\quad
	\begin{subfigure}{0.45\linewidth}
	        \includegraphics[width=\linewidth]{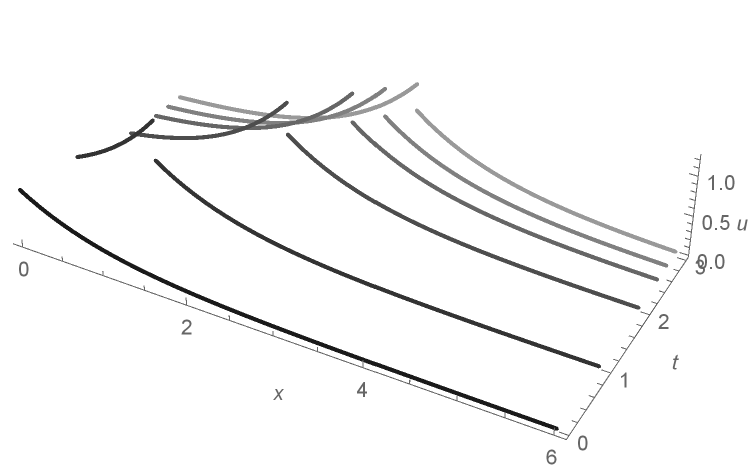}
	        \caption{Solution \eqref{8.0.11} with $c=1$ and $t_0=-3$. Here $t$ varies within $[0,3]$}
	        \label{subfict=-3}
         \end{subfigure}
	\caption{Figure \ref{subt=3} shows the behavior of \eqref{8.0.11} with $t_0=3$. The solution start increasing fast as $t$ approaches $t=3$, that is precisely its blow up. Figure \ref{subfict=-3}, on the other hand, shows \eqref{8.0.11} with $t_0=-3$. Such a choice for $t_0$ implies the global existence for $t>0$ and the solution asymptotically approaches the peakon solution for the DP equation.}
	\label{fig5}
\end{figure}

\begin{remark}\label{rem8.1}
    Solutions \eqref{8.0.11}, for $t_0>0$, as illustrated in Fig \ref{subt=3} does not qualify to be understood as an entropy solution in the sense \cite{coc1}. The situation is radically different for $t_0<0$, as the situation in Fig \ref{subfict=-3}. As long as the condition $t_0<0$ is satisfied, \eqref{8.0.11} is an entropy solution. Essentially, in the first situation the solution jumps from lower to higher values, whereas in the second case the jump discontinuity goes downwards. See \cite[Theorem 2.3]{lund} and \cite{coc2} for further details.
\end{remark}
\section{Discussion}\label{sec9}

Equation \eqref{1.0.1} was first deduced in \cite{nov} in a classification of integrable equations, where the Lax pair \eqref{1.0.2} was also reported. The existence of the Lax pair implies, in particular, the existence of an infinite number of conserved quantities. Despite their existence, very few of them have been known, most of them being obtained from the perspective of analysis of PDEs, see \cite{tu-dcds,tu-na,tu-mana}.

Our Theorem \ref{teo3.2}, on the other hand, presents six conserved currents, leading to the six conserved quantities in Theorem \ref{teo3.3}, that are obtained from the multipliers, or cosymmetries, up to second order in Theorem 3.1.

Among the quantities \eqref{3.0.3}--\eqref{3.0.7}, two are not new: the first one, that reflects the fact that the equation itself is a conservation law, and the quantity ${\cal H}_2$, see \eqref{3.0.5}, that is nothing by the square of the Sobolev norm, was first obtained in \cite[Lema 4.1]{tu-dcds}. While some of the quantities showed theorem \ref{teo3.3} are not necessarily new, the approach used to establish them are rather different from that in \cite{tu-dcds}, since we used ideas from symmetries and integrable systems, see \cite{olverbook,vitolo}, to find the cosymmetries and ultimately, the conserved quantities.

Still about symmetries, our results show that the generators of group of diffeomorphisms leaving the set of solutions invariant is spanned by four vector fields, given in \eqref{4.0.1}, whose actions on solutions are given in Table \ref{tab2}. Generators $X_1,\, X_2$ and $X_3$ correspond to translations in $x$, $t$ and a scaling in the $(t,u)-$plane, that are symmetries that \eqref{1.0.1} shares with both CH and DP equations \cite{clark}. In closing note, we observe that $X_1$, $X_2$ and $X_3$ are common generators to all equations discovered by Novikov with quadratic non-linearities \cite{nazime-tese,nazime-dcds}.

Few explicit solutions for \eqref{1.0.1} seem to have been discovered until now. The only ones we are aware of are given in \cite{tu-mana}, but that solution is not compatible with the regularity required by the coefficients of the first fundamental form \eqref{2.0.7}. The symmetries, on the other hand, naturally lead to the investigation of invariant solutions, that we carried out in Section \ref{sec4} and we used to show explicit first fundamental forms for the PSS determined by the solution of the equation. Such a geometric nature of the equation seems to have been unnoticed so far. 

The triad of one-forms defining the PSS structure of the equation does not depend on any arbitrary parameter. Moreover, relations between \eqref{1.0.1} and the DP equation, discussed in Section \ref{sec8}, jointly with the results recently reported in \cite{freire-dp}, do not allow us foreseeing \eqref{1.0.1} as a geometrically integrable equation, in the sense of \cite{reyes2002,reyes2011}, at least with a $\frak{sl}(2,\R)$ representation. This is also reinforced by the nature of the Lax pair \eqref{1.0.2}. 

Coming back to solutions, we observed that some of the solutions of ODEs obtained from symmetries have a significant change of behavior with respect to the axis $z=0$, see Figure \ref{fig1}. In particular, the families of pairs \eqref{4.2.5} and \eqref{4.2.6} have a sort of symmetry with respect to $z=0$ regarding their bounded and unbounded parts. Since for both $z>0$ and $z<0$ regions \eqref{4.2.5} and \eqref{4.2.6} are solutions for \eqref{4.2.4}, and then they provide solutions for \eqref{1.0.1}, we started wondering if we could eliminate the unbounded parts of the solutions and glue them in a smooth process. In particular, this would produce solutions compatible with some conserved quantities we had previously found.

The main problem of such an idea is just $z=0$, that could lead to a line of discontinuity. Although this would not necessarily be a problem considering conserved quantities, it might lead to a function that does not solve the equation in any sense.

In order to glue these two different solutions and yet produce a solution for the equation, we considered its weak formulation. The main reason for that is the following: any strong solution is a weak solution for the equation, but we may have weak solutions less regular than the strong one. The former could have more chances of success in surviving the surgery for removing unbounded parts of a pair of solutions, gluing them somehow so that the emerging function serves as a solution for the equation. Such a dramatic process cannot be fully executed without any price, that in our case, is the loss of regularity of the solution. The process developed in section \ref{sec5} -- cutting off unbounded parts of the solution and glue them in a continuous way -- we named {\it collage}.

Solution \eqref{5.0.5} was earlier obtained (up to notation) in \cite{tu-mana}, but in a different way. In our framework  it is a consequence of the collage process. More importantly, \eqref{5.0.5} lead us to find \eqref{6.0.16} and \eqref{7.0.8}, that are new solutions for \eqref{1.0.1}.

The collage process did not allow us to produce arbitrary new solutions. On the contrary, in order for a new solution to emerge, it selected from potential pairs those compatible with continuity, imposing strong restrictions on the original pair.

The collage process, though carried out in the present paper for \eqref{1.0.1}, can be applied to a large class of equations. In fact, we can interpret the famous peakon solutions of several non-local evolution equation, such as the $b-$equation and Novikov equation, as the resulting of the collage process applied to the solutions $u_1(x,t)=\al e^{x-ct}$ and $u_2(x,t)=\be e^{ct-x}$, $\al,\,\be\in\R$, for these equations.

Last but not least, we explored connections between \eqref{1.0.1} and the DP equation. In particular, we showed that solutions of these equations in the Sobolev class $H^s(\R)$ are in $1-1$ correspondence, at least for $s$ in the range required by Theorem \ref{teo8.1}. More interestingly, by relaxing this condition we obtained less regular solutions for the DP equation, more precisely, we got back the shock-peakon solution derived by Lundmark \cite{lund}.

We observe that \eqref{8.0.11} can be seen as a combination of a peakon solution with the discontinuous function $v_s$ (see the function given after \eqref{8.0.11}). The fact that we have a solution combining a peakon with another function like shown in \eqref{8.0.11} is not new in the literature of the DP equation. A similar fact was reported by Qiao \cite[Proposition 1]{qiao-chaos}, although Qiao's solution has a different qualitative nature. However, unlike the findings in \cite{qiao-chaos}, the solution \eqref{8.0.11} has a significant different behavior:
\begin{itemize}
    \item it develops a shock;
    \item for $t_0<0$ it is asymptotically dominated by the peakon solution outside the line of discontinuity as $t\gg1$;
    \item for $t_0\geq0$ the solution develops a blow up in finite time.
\end{itemize}

Further details on shock-peakon solutions can be found in Lundmark's paper \cite{lund}.

\section{Conclusion}\label{sec10}

In this paper we showed that equation \eqref{1.0.1} is PSS equation, in the sense its solutions endow certain regions of the $(x,t)-$space with a Riemannian metric. We found Lie symmetries of the equation, that lead us explicit solutions for the equation, as well as inspired us to develop the collage process, that produced many weak solutions for the equation, such as pseudo-peakon and shock-pseudo-peakon solution. We also established conserved quantities for the equation and showed that some of the weak solutions we obtained are compatible with them. Finally, we explored relations between the weak solutions of \eqref{1.0.1} with weak solutions of the DP equation.

\section*{Acknowledgements}

P. L. da Silva and I. L. Freire are grateful to CNPq (grants nº 310074/2021-5 and nº 308884/2022-1, respectively). All authors are thankful to FAPESP (grant nº  2024/01437-8) for financial support.

The authors would like to thank Professor Zhijun Qiao for useful discussions and point us out that \eqref{5.0.5} can be also seen as in \eqref{5.0.11}.

We are deeply indebted to Professor Hans Lundmark for encouragement and useful remarks and suggestions, that led us to improve an earlier version of the manuscript.

{\bf Declarations of interest:} none.


\begin{thebibliography}{0}


\bibitem{chprl} R. Camassa and D. D. Holm, An integrable shallow water equation with peaked solitons, Physical Reviews Letters, vol. 71, 1661--1664, (1993).

\bibitem{chern} S. S. Chern and K. Tenenblat, Pseudo-spherical surfaces and evolution equations, Stud. Appl. Math., vol. 74, 55--83, (1986).

\bibitem{chev-2007} A. Cheviakov, GeM software package for computation of symmetries and conservation laws of differential equations, Comp. Phys. Comm., vol. 176, 48--61, (2007).

\bibitem{chev-2010-1} A. Cheviakov, Symbolic computation of local symmetries of nonlinear and linear partial and ordinary differential equations, Math. Comput. Sci., vol.4, 203--222, (2010).

\bibitem{chev-2010-2} A. Cheviakov, Computation of fluxes of conservation laws, J. Eng. Math., vol. 66, 153--173, (2010).

\bibitem{chev-2014} A. Cheviakov, {\it Symbolic computation of nonlocal symmetries and nonlocal conservation laws of partial differential equations using the GeM package for Maple, Similarity and Symmetry Methods}. Lecture Notes 165 in Applied and Computational Mechanics 73, Springer; 2014.

\bibitem{chev-2017} A. Cheviakov, Symbolic computation of equivalence transformations and parameter reduction for nonlinear physical models, Comp. Phys. Comm., vol. 220, 56--73, (2017).

\bibitem{clark} P.A. Clarkson, E.L. Mansfield, T.J. Priestley,
Symmetries of a class of nonlinear third-order partial differential equations,
Mathematical and Computer Modelling, vol. 25, 195--212, (1997).

\bibitem{coc1} G. M. Coclite and K. H. Karlsen, On the well-posedness of the Degasperis-Procesi equation, J. Funct. Anal., vol. 233, 60--91, (2006).

\bibitem{coc2} G. M. Coclite and K. H. Karlsen, On the uniqueness of discontinuous solutions to the Degasperis-Procesi equation, J. Diff. Equ., vol. 234, 142--160, (2007).

\bibitem{const-book} A. Constantin, Fourier Analysis, LMS Student Text, 85, (2016).

\bibitem{stelios1} Dimas S, Tsoubelis D. {\it SYM: A new symmetry-finding package for Mathematica}. Proceedings of the 10th International Conference in Modern Group Analysis, Larnaca, Cyprus. 2004:64--70.

\bibitem{stelios2} Dimas S, Tsoubelis D. {\it A new heuristic algorithm for solving overdetermined systems of PDEs in Mathematica}. 6th International Conference on Symmetry in Nonlinear Mathematical Physics, Kiev, Ukraine. 2005:20--26.

%\bibitem{hak} O. Christov and S. Hakkaev, On the Cauchy problem for the periodic b-family of equations and of the non-uniform continuity of Degasperis-Procesi equation, J. Math. Anal. Appl., vol. 360, 47--56, (2009).

%\bibitem{const1998-1} A. Constantin, J. Escher, Global existence and blow-up for a shallow water equation, Annali Sc. Norm. Sup. Pisa, vol. 26, 303--328, (1998).

%\bibitem{const1998-2} A. Constantin and J. Escher, Wave breaking for nonlinear nonlocal shallow water equations, Acta Math., vol. 181, 229--243 (1998).

%\bibitem{const2000-1} A. Constantin, Existence of permanent and breaking waves for a shallow water equation: a geometric approach, Ann. Inst. Fourier, vol. 50, 321--362, (2000).

\bibitem{freire-ch} I. L. Freire,  Breakdown of pseudospherical surfaces determined by the Camassa-Holm equation, J. Diff. Equ., vol. 378, 339--359, (2024).

\bibitem{freire-dp} I. L. Freire,  Local isometric immersions and breakdown of manifolds determined by Cauchy problems of the Degasperis-Procesi equation, arXiv:2311.05226, (2023).

\bibitem{li-jmp} J. Li and Z. Qiao, Peakon, pseudo-peakon, and cuspon solutions for two generalized Camassa-Holm equations, J. Math. Phys., vol. 54, paper 123501, (2013). 

\bibitem{qiao2018} Q. Liu, Z. Qiao, Fifth order Camassa–Holm model with pseudo-peakons and multi-peakons, Int. J. Non-Linear Mech., vol. 105, 179--185, (2018).

\bibitem{lund} H. Lundmark, Formation and dynamics of shock waves in the Degasperis–Procesi equation, J. Nonlinear Sci., vol. 17, 169--198, (2007).


\bibitem{nazime-tese} N. Sales Filho, Simetrias e leis de conservação de equações de Novikov, PhD thesis in Mathematics, UFABC, (2021), in Portuguese.

\bibitem{nazime-dcds} N. Sales Filho and I. L. Freire, Symmetries and currents of the quadratic Novikov equations, Discrete and Continuous Dynamical Systems-Series S, (2024).doi: 10.3934/dcdss.2024119	

\bibitem{nov} V. Novikov, Generalizations of the Camassa--Holm equation, J Phys A: Math Theor., vol.42, paper 342002, (2009). 

\bibitem{olverbook} P. J. Olver, {\it Applications of Lie groups to differential equations}. Second ed. Springer; (1993).

\bibitem{qiao-chaos} Z. Qiao, M-Shape peakons, dehisced solitons, cuspons and new 1-peak solitons for the Degasperis–Procesi equation, Chaos, Solitons and Fractals, vol. 37, 501--507, (2008).

\bibitem{reyes2002} E. G. Reyes, Geometric integrability of the Camassa-Holm equation, Lett. Math. Phys, vol. 59, 117--131, (2002).

\bibitem{reyes2011} E. G. Reyes, Equations of pseudo-spherical type (After S. S. Chern and K. Tenenblat), Results Math., vol. 60, 53--101, (2011).

\bibitem{sasaki} R. Sasaki, Soliton equations and pseudospherical surfaces, Nuclear Physics B, vol. 154, 343--357, (1979).

\bibitem{taylor} M. E. Taylor, Partial Differential Equations I, 2nd edition, Springer, (2011).

\bibitem{tu-dcds} Xi Tu and Zhaoyang Yin, Local well-posedness and blow-up phenomena for a generalized Camassa-Holm equation with peakon solutions, Discrete and Continuous Dynamical Systems, vol. 36, 2781-2801, (2015). 



\bibitem{tu-na} Xi Tu and Zhaoyang Yin, Blow-up phenomena and local well-posedness for a generalized Camassa–Holm equation in the critical Besov space, Nonlin. Anal., vol. 128, 1--19, (2015).

\bibitem{tu-mana} Xi Tu and Zhaoyang Yin, Global weak solutions for a generalized Camassa–Holm equation, Mathemat. Nach., vol. 291, 2457--2475, (2018).

\bibitem{vitolo} I.S. Krasil'shchik, A.M. Verbovetsky and
R. Vitolo, The symbolic computation of integrability
structures for partial differential equations, Springer, (2017).

\end{thebibliography}
\end{document}